\documentclass[11pt]{article}
\usepackage[utf8]{inputenc} 
\usepackage[a4paper, margin=1in]{geometry}
\usepackage{mathtools}
\mathtoolsset{showonlyrefs=true}
\usepackage{amsthm, amssymb, bbm, mathrsfs}
\usepackage{bm}
\usepackage{subcaption}
\usepackage{graphicx}
\usepackage{overpic}
\usepackage{float}
\usepackage[normalem]{ulem}
\usepackage{soul}
\usepackage{amsfonts}
\usepackage[colorlinks=true, linkcolor=blue, urlcolor=cyan, citecolor=green]{hyperref}
\usepackage[usenames,dvipsnames]{xcolor}
\usepackage{comment}
\usepackage{algorithm2e}

\theoremstyle{plain}
\newtheorem{defn}{Definition}[section]
\newtheorem{lemma}{Lemma}[section]
\newtheorem{remark}{Remark}[section]

\newtheorem{theorem}{Theorem}[section]
\newtheorem{cor}{Corollary}[section]

\newtheorem{prop}{Proposition}[section]
\numberwithin{equation}{section}
\numberwithin{figure}{section}
\numberwithin{table}{section}

\newcommand{\E}{\mathbb{E}}

\newcommand{\BS}{\rm BS}

\newcommand{\var}{{\rm var}}

\newcommand{\beas}{\begin{eqnarray*}}
\newcommand{\eeas}{\end{eqnarray*}}
\newcommand{\bal}{\begin{align}}
\newcommand{\eal}{\end{align}}
\newcommand{\bas}{\begin{align*}}
\newcommand{\eas}{\end{align*}}
\newcommand{\bea}{\begin{eqnarray}}
\newcommand{\eea}{\end{eqnarray}}
\newcommand{\tmop}{\end{eqnarray}}
\newcommand{\ben}{\begin{enumerate}}
\newcommand{\een}{\end{enumerate}}

\newcommand{\ui}{\mathrm{i}}

\newcommand{\dm}{\diamond}

\newcommand{\cF}{\mathcal{F}}

\newcommand{\mF}{\mathbb{F}}

\newcommand{\tmF}{\tilde {\mathbb{F}}}

\newcommand{\cO}{\mathcal{O}}

\newcommand{\bi}{\begin{itemize}}
\newcommand{\ei}{\end{itemize}}
\newcommand{\beq}{\begin{equation}}
\newcommand{\eeq}{\end{equation}}
\newcommand{\bv}{\begin{verbatim}}
\newcommand{\ev}{\end{verbatim}}

\newcommand{\ee}[1]{ {\mathbb{E}\left[{#1}\right]}}

\newcommand{\angl}[1]{\left\langle{#1}\right\rangle}

\newcommand{\bE}{\mathbb{E}}
\newcommand{\bR}{\mathbb{R}}

\usepackage{tikz}
\newcommand{\tkz}{\tikzexternaldisable}
\usepackage{mhequ} 
\usetikzlibrary{snakes}
\usetikzlibrary{decorations}
\usetikzlibrary{positioning}
\usetikzlibrary{shapes}
\usetikzlibrary{external}
\makeatletter
\pgfdeclareshape{crosscircle}
{
  \inheritsavedanchors[from=circle] 
  \inheritanchorborder[from=circle]
  \inheritanchor[from=circle]{north}
  \inheritanchor[from=circle]{north west}
  \inheritanchor[from=circle]{north east}
  \inheritanchor[from=circle]{center}
  \inheritanchor[from=circle]{west}
  \inheritanchor[from=circle]{east}
  \inheritanchor[from=circle]{mid}
  \inheritanchor[from=circle]{mid west}
  \inheritanchor[from=circle]{mid east}
  \inheritanchor[from=circle]{base}
  \inheritanchor[from=circle]{base west}
  \inheritanchor[from=circle]{base east}
  \inheritanchor[from=circle]{south}
  \inheritanchor[from=circle]{south west}
  \inheritanchor[from=circle]{south east}
  \inheritbackgroundpath[from=circle]
  \foregroundpath{
    \centerpoint%
    \pgf@xc=\pgf@x%
    \pgf@yc=\pgf@y%
    \pgfutil@tempdima=\radius%
    \pgfmathsetlength{\pgf@xb}{\pgfkeysvalueof{/pgf/outer xsep}}%
    \pgfmathsetlength{\pgf@yb}{\pgfkeysvalueof{/pgf/outer ysep}}%
    \ifdim\pgf@xb<\pgf@yb%
      \advance\pgfutil@tempdima by-\pgf@yb%
    \else%
      \advance\pgfutil@tempdima by-\pgf@xb%
    \fi%
    \pgfpathmoveto{\pgfpointadd{\pgfqpoint{\pgf@xc}{\pgf@yc}}{\pgfqpoint{-0.707107\pgfutil@tempdima}{-0.707107\pgfutil@tempdima}}}
    \pgfpathlineto{\pgfpointadd{\pgfqpoint{\pgf@xc}{\pgf@yc}}{\pgfqpoint{0.707107\pgfutil@tempdima}{0.707107\pgfutil@tempdima}}}
    \pgfpathmoveto{\pgfpointadd{\pgfqpoint{\pgf@xc}{\pgf@yc}}{\pgfqpoint{-0.707107\pgfutil@tempdima}{0.707107\pgfutil@tempdima}}}
    \pgfpathlineto{\pgfpointadd{\pgfqpoint{\pgf@xc}{\pgf@yc}}{\pgfqpoint{0.707107\pgfutil@tempdima}{-0.707107\pgfutil@tempdima}}}
  }
}
\makeatother

\def\X{\tikz[baseline=-2.8,scale=0.15]{\node[X] {};}} 
\def\M{\tikz[baseline=-2.8,scale=0.15]{\node[M] {};}} 

\def\XXd{\tikz[baseline=-1,scale=0.15]{\draw (-1,1) node[X] {} -- (0,0) node[not] {} -- (1,1) node[X] {};}} 

\def\MXd{\tikz[baseline=-1,scale=0.15]{\draw (-1,1) node[M] {} -- (0,0) node[not] {} -- (1,1) node[X] {};}} 

\def\MMd{\tikz[baseline=-1,scale=0.15]{\draw (-1,1) node[M] {} -- (0,0) node[not] {} -- (1,1) node[M] {};}} 

\def\MXdXd{\tikz[baseline=-1,scale=0.15]{
\draw (0,0) node[not] {} -- (-1,1) node[not] {}
-- (-2,2) node[M]{} ;
\draw (0,0) -- (1,1) node[X] {};
\draw (-1,1) -- (0,2) node[X] {};
}}

\def\MMdXd{\tikz[baseline=-1,scale=0.15]{
\draw (0,0) node[not] {} -- (-1,1) node[not] {}
-- (-2,2) node[M]{} ;
\draw (0,0) -- (1,1) node[X] {};
\draw (-1,1) -- (0,2) node[M] {};
}}

\def\MXdXdXd{\tikz[baseline=-1,scale=0.15]{
\draw (0,0) node[not] {} -- (-1,1) node[not] {}
-- (-2,2) node[not]{}  -- (-3,3) node[M]{};
\draw (0,0) -- (1,1) node[X] {};
\draw (-1,1) -- (0,2) node[X] {};
\draw (-2,2) -- (-1,3) node[X] {};
}}

\def\MXdMd{\tikz[baseline=-1,scale=0.15]{
\draw (0,0) node[not] {} -- (-1,1) node[not] {}
-- (-2,2) node[M]{} ;
\draw (-1,1) -- (0,2) node[X] {};
\draw (0,0) -- (1,1) node[M] {};
}}

\def\MMdMd{\tikz[baseline=1,scale=0.15]{
\draw (0,0) node[not] {} -- (-1,1) node[not] {}
-- (-2,2) node[M]{} ;
\draw (0,0) -- (1,1) node[M] {};
\draw (-1,1) -- (0,2) node[M] {};
}}

\def\MMdXdXd{\tikz[baseline=1,scale=0.15]{
\draw (0,0) node[not] {} -- (-1,1) node[not] {}
-- (-2,2) node[not]{}  -- (-3,3) node[M]{};
\draw (-2,2) -- (-1,3) node[M] {};
\draw (-1,1) -- (0,2) node[X] {};
\draw (0,0) -- (1,1) node[X] {};
}}

\def\MXdMdXd{\tikz[baseline=1,scale=0.15]{
\draw (0,0) node[not] {} -- (-1,1) node[not] {}
-- (-2,2) node[not]{}  -- (-3,3) node[M]{};
\draw (-2,2) -- (-1,3) node[X] {};
\draw (-1,1) -- (0,2) node[M] {};
\draw (0,0) -- (1,1) node[X] {};
}}

\def\MXdXdMd{\tikz[baseline=1,scale=0.15]{
\draw (0,0) node[not] {} -- (-1,1) node[not] {}
-- (-2,2) node[not]{}  -- (-3,3) node[M]{};
\draw (-2,2) -- (-1,3) node[X] {};
\draw (-1,1) -- (0,2) node[X] {};
\draw (0,0) -- (1,1) node[M] {};
}}

\def\MXdMXdd{\tikz[baseline=1,scale=0.15]{\draw (0,0) node[not] {} -- (-1,1) node[not] {};
\draw (0,0) -- (1,1) node[not] {};
\draw (-1,1) -- (-1.5,2.5) node[M] {};
\draw (-1,1) -- (-0.5,2.5) node[X] {};
\draw (1,1) -- (0.5,2.5) node[M] {};
\draw (1,1) -- (1.5,2.5) node[X] {};}}

\def\MXdXdXdXd{\tikz[baseline=1,scale=0.15]{
\draw (0,0) node[not] {} -- (-1,1) node[not] {}
-- (-2,2) node[not]{} -- (-3,3) node[not]  {} 
-- (-4,4) node[M]{};
\draw (-3,3) -- (-2,4) node[X] {};
\draw (-2,2) -- (-1,3) node[X] {};
\draw (-1,1) -- (0,2) node[X] {};
\draw (0,0) -- (1,1) node[X] {};
}}

\def\MXdXdXdXdXd{\tikz[baseline=3,scale=0.15]{
\draw (0,0) node[not] {} -- (-1,1) node[not] {}
-- (-2,2) node[not]{} -- (-3,3) node[not]  {} 
-- (-4,4) node[not]{}-- (-5,5) node[M]{};
\draw (-4,4) -- (-3,5) node[X] {};
\draw (-3,3) -- (-2,4) node[X] {};
\draw (-2,2) -- (-1,3) node[X] {};
\draw (-1,1) -- (0,2) node[X] {};
\draw (0,0) -- (1,1) node[X] {};
}}

\def\MMdXdXdXd{\tikz[baseline=3,scale=0.15]{
\draw (0,0) node[not] {} -- (-1,1) node[not] {}
-- (-2,2) node[not]{} -- (-3,3) node[not]  {} 
-- (-4,4) node[M]{};
\draw (-3,3) -- (-2,4) node[M] {};
\draw (-2,2) -- (-1,3) node[X] {};
\draw (-1,1) -- (0,2) node[X] {};
\draw (0,0) -- (1,1) node[X] {};
}}

\def\MXdMdXdXd{\tikz[baseline=3,scale=0.15]{
\draw (0,0) node[not] {} -- (-1,1) node[not] {}
-- (-2,2) node[not]{} -- (-3,3) node[not]  {} 
-- (-4,4) node[M]{};
\draw (-3,3) -- (-2,4) node[X] {};
\draw (-2,2) -- (-1,3) node[M] {};
\draw (-1,1) -- (0,2) node[X] {};
\draw (0,0) -- (1,1) node[X] {};
}}

\def\MXdXdMdXd{\tikz[baseline=3,scale=0.15]{
\draw (0,0) node[not] {} -- (-1,1) node[not] {}
-- (-2,2) node[not]{} -- (-3,3) node[not]  {} 
-- (-4,4) node[M]{};
\draw (-3,3) -- (-2,4) node[X] {};
\draw (-2,2) -- (-1,3) node[X] {};
\draw (-1,1) -- (0,2) node[M] {};
\draw (0,0) -- (1,1) node[X] {};
}}

\def\MXdXdXdMd{\tikz[baseline=3,scale=0.15]{
\draw (0,0) node[not] {} -- (-1,1) node[not] {}
-- (-2,2) node[not]{} -- (-3,3) node[not]  {} 
-- (-4,4) node[M]{};
\draw (-3,3) -- (-2,4) node[X] {};
\draw (-2,2) -- (-1,3) node[X] {};
\draw (-1,1) -- (0,2) node[X] {};
\draw (0,0) -- (1,1) node[M] {};
}}

\def\MMdXdMd{\tikz[baseline=3,scale=0.15]{
\draw (0,0) node[not] {} -- (-1,1) node[not] {}
-- (-2,2) node[not]{}  -- (-3,3) node[M]{};
\draw (-2,2) -- (-1,3) node[M] {};
\draw (-1,1) -- (0,2) node[X] {};
\draw (0,0) -- (1,1) node[M] {};
}}

\def\MMdMdXd{\tikz[baseline=3,scale=0.15]{
\draw (0,0) node[not] {} -- (-1,1) node[not] {}
-- (-2,2) node[not]{}  -- (-3,3) node[M]{};
\draw (-2,2) -- (-1,3) node[M] {};
\draw (-1,1) -- (0,2) node[M] {};
\draw (0,0) -- (1,1) node[X] {};
}}

\def\MXdMdMd{\tikz[baseline=3,scale=0.15]{
\draw (0,0) node[not] {} -- (-1,1) node[not] {}
-- (-2,2) node[not]{}  -- (-3,3) node[M]{};
\draw (-2,2) -- (-1,3) node[X] {};
\draw (-1,1) -- (0,2) node[M] {};
\draw (0,0) -- (1,1) node[M] {};
}}

\def\MMdMXdd{\tikz[baseline=3,scale=0.15]{\draw (0,0) node[not] {} -- (-1,1) node[not] {};
\draw (0,0) -- (1,1) node[not] {};
\draw (-1,1) -- (-1.5,2.5) node[M] {};
\draw (-1,1) -- (-0.5,2.5) node[M] {};
\draw (1,1) -- (0.5,2.5) node[M] {};
\draw (1,1) -- (1.5,2.5) node[X] {};}}

\def\MXdMXddXd{\tikz[baseline=3,scale=0.15]{
\draw (1,0) node[not] {} -- (0,1) node[not] {};
\draw (0,1) node[not] {} -- (-1,2) node[not] {};
\draw (0,1) -- (1,2) node[not] {};
\draw (-1,2) -- (-1.5,3.5) node[M] {};
\draw (-1,2) -- (-0.5,3.5) node[X] {};
\draw (1,2) -- (0.5,3.5) node[M] {};
\draw (1,2) -- (1.5,3.5) node[X] {};
\draw (1,0) node[not] {} -- (2,1) node[X] {};
}}

\def\MXdXdMXdd{\tikz[baseline=3,scale=0.15]{
\draw (1,0) node[not] {} -- (0,1) node[not] {};
\draw (0,1) node[not] {} -- (-1,2) node[not] {};
\draw (0,1) -- (0.5,2.5) node[X] {};
\draw (-1,2) -- (-1.5,3.5) node[M] {};
\draw (-1,2) -- (-0.5,3.5) node[X] {};
\draw (2,1) -- (1.5,2.5) node[M] {};
\draw (2,1) -- (2.5,2.5) node[X] {};
\draw (1,0) node[not] {} -- (2,1) node[not] {};
}}

\colorlet{symbols}{blue!90!black}
\colorlet{testcolor}{green!60!black}
\colorlet{connection}{red!30!black}

\tikzset{
root/.style={circle,fill=black!50,inner sep=0pt, minimum size=3mm},
        dot/.style={circle,fill=black,inner sep=0pt, minimum size=1.5mm},
        A/.style={very thin,circle,fill=BlueGreen!100,draw=black,inner sep=0pt,minimum size=1.2mm}, 
        X/.style={very thin,circle,fill=Gray!80,draw=black,inner sep=0pt,minimum size=1.2mm}, 
        Z/.style={very thin,circle,fill=RubineRed!100,draw=black,inner sep=0pt,minimum size=1.2mm}, 
        M/.style={very thin,circle,fill=YellowOrange!100,draw=black,inner sep=0pt,minimum size=1.2mm}, 
not/.style={thin,circle,fill=symbols,draw=connection,fill=connection,inner sep=0pt,minimum size=0.35mm},
	>=stealth,
        }

\begin{document}
\title{\textbf{Magic strikes for variance and gamma contracts, and other attainable claims}}
\author{Florian Bourgey\footnote{NYU, fb2615@nyu.edu}, Jim Gatheral\footnote{Baruch College, CUNY, jim.gatheral@baruch.cuny.edu}}
\date{\today}
\maketitle
\begin{abstract}
Building on the forest expansion of Al\`os, Gatheral and Radoi\v{c}i\'c
and on the explicit Bergomi--Guyon smile expansion derived by
Bourgey and Gatheral (2026), we derive fixed-point
approximations, in terms of the implied total variance at a small number of
\emph{magic strikes}, for the fair values of power payoff contracts; the
variance and gamma contracts arise as the endpoints of this one-parameter
family, priced by a single set of formulae.  We prove that such approximations
exist at every order: at order $n$, $2\lceil n/2\rceil+1$ magic strikes
suffice.
Numerical tests under the Heston and rough Bergomi models demonstrate good
accuracy even for strongly skewed smiles, and the method applies directly to
interpolated market smiles without strike inversion.  For the volatility contract, we show that the maturity-$T$
smile does not determine the leading correction to the Rolloos--Arslan approximation.
\end{abstract}

\noindent \textbf{Keywords:} variance contract; gamma contract; volatility contract; power payoff;
forest expansion; Bergomi--Guyon expansion; implied volatility; magic
strikes.\\

\noindent \textbf{JEL classification:} G13, C63.

\section{Introduction}
For concreteness, we focus on forward variance models of the form
\begin{align}
\frac{dS_t}{S_t} &= \sqrt{V_t}\left(\rho\, dW_t + \sqrt{1-\rho^2}\, dW^\perp_t\right),\notag\\
d\xi_t(u) &= f_t(\xi)\, \kappa(u-t)\, dW_t,
\label{eq:SVmodel}
\end{align}
where $\xi_t(t) = V_t$, $X_t = \log S_t$, $V_t\,dt = d\langle X\rangle_t$, $W$ and $W^\perp$ are
two independent Brownian motions, and $\rho \in [-1,1]$. In particular, such a model is
scale-invariant, with $\xi$ adapted to the filtration generated by $W$.
For ease of notation and without loss of generality, suppose that $t=0$ and that the
entire implied volatility smile $\sigma_{\BS}(k,T)$ is known for a fixed expiration $T$,
where $k = \log(K/S_0)$ denotes the log-strike. Define the functions
$$
d_\pm(k) := \frac{-k}{\sigma_{\BS}(k,T)\sqrt{T}} \pm \frac{\sigma_{\BS}(k,T)\sqrt{T}}{2},
$$
and, following Fukasawa \cite{fukasawa2012normalizing}, denote their inverses by
$g_\pm(z) = d_\pm^{-1}(z)$. We further define
$$
\sigma_\pm(z) := \sigma_{\BS}(g_\pm(z),T)\sqrt{T},
$$
and $X_{t,T} := X_T - X_t = \log(S_T/S_t)$. We write $\langle X\rangle_{t,T} := \langle X\rangle_T-\langle X\rangle_t$ for the quadratic variation over $[t,T]$. The well-known Chriss--Morokoff formula
\cite{chriss1999market,gatheral2006volatility} gives the value of a variance contract as
\begin{equation}
M(T) := \E\left[\langle X\rangle_{0,T}\right] = -2\,\E\left[X_{0,T}\right]
= \int_0^T \xi_0(u)\, du = \int_\bR dz\, N'(z)\, \sigma_-^2(z),
\label{eq:VS}
\end{equation}
where $N(z) := \int_{-\infty}^z \frac{e^{-t^2/2}}{\sqrt{2\pi}}\, dt$ denotes the standard
normal cumulative distribution function (CDF). Similarly, as shown in
\cite{fukasawa2012normalizing}, the value of a gamma contract is
\begin{equation}
G(T) := \E\left[\frac{S_T}{S_0}\,\langle X\rangle_{0,T}\right]
= 2\,\E\left[X_{0,T}\, e^{X_{0,T}}\right] = \int_\bR dz\, N'(z)\, \sigma_+^2(z).
\label{eq:GS}
\end{equation}
The right-hand sides of \eqref{eq:VS} and \eqref{eq:GS} -- which we refer to as the
{\em Fukasawa representations} of the variance and gamma contracts -- express the fair
value as a Gaussian-weighted integral of the smile in the $z$-coordinate. They are exact
up to interpolation and extrapolation but require numerical strike inversion.
In practice, we do not observe the entire implied volatility smile: only a discrete set
of implied volatilities is available, which must be interpolated and extrapolated to
obtain the full smile. From \eqref{eq:VS} and \eqref{eq:GS} we have the simple
leading-order approximations
\begin{equation}
M(T) \approx \sigma_-(0)^2; \qquad G(T) \approx \sigma_+(0)^2.
\label{eq:approx0}
\end{equation}
These approximations correspond to approximating the variance and gamma contracts
using the implied volatility at log-strikes satisfying $d_-(k)=0$ and $d_+(k)=0$,
respectively. 
A related traders' rule, justified by Rolloos and Arslan
\cite{rolloos2017taylor} and analyzed by Al\`os, Rolloos and Shiraya
\cite{alos2021difference}, approximates the volatility contract
$\E\big[\sqrt{\langle X\rangle_{0,T}}\big]$ by $\sigma_-(0)$, the implied
volatility at the strike where the Black--Scholes vanna vanishes.
All of these approximations, however, are exact only to first order in
the volatility of volatility, and lose accuracy for the strongly skewed
smiles observed in practice. 
We show that the variance and gamma contract values can be recovered much
more accurately 
using the total implied variance evaluated at a small number of special
strikes, which we call {\em magic strikes}.  The strikes where $d_\mp(k)=0$ are the
first-order magic strikes for the variance and gamma contracts.

\subsubsection*{Our contributions}
First, building on the forest expansion of
\cite{alos2020exponentiation,friz2020forests} and on the Bergomi--Guyon (BG) smile
expansion \cite{bergomi2012stochastic}, computed explicitly to sixth order in
the companion paper \cite{bourgey2026demystifying}, we derive a systematic
framework that converts the smile expansion into fixed-point equations for
magic strikes, obtaining explicit higher-order approximations for the fair
values of power payoff contracts.
The variance and gamma
contracts are treated as the endpoints of a one-parameter family of power payoffs
priced by a single set of formulae, avoiding the strike inversion appearing in
the Fukasawa representations.
Second, we prove that the construction succeeds at every order, for any
attainable claim.
For the volatility contract, the second-order correction to the
Rolloos--Arslan approximation vanishes when $\rho=0$ but is otherwise not
determined by the smile.
Numerically, the method is fast and performs well in the Heston and rough
Bergomi models and on SPX market data.  Because the smile is read only at
the magic strikes, all within two standard deviations of $k_0$, the
approximations are independent of how the smile is extrapolated beyond
them, in contrast with representations that integrate over
the entire extrapolated smile.

The rest of the paper is organized as follows. In Section \ref{sec:forests}, we recall
the forest expansion \cite{alos2020exponentiation,friz2020forests} and the resulting
smile expansion, computed explicitly in \cite{bourgey2026demystifying}. In Section \ref{sec:magic-strikes}, we construct higher-order magic-strike approximations
for the variance, gamma, and power-payoff contracts.
In Section \ref{sec:iff}, we prove that magic-strike approximations exist
at every order for any attainable claim.  For the volatility contract, we
show that the maturity-$T$ smile does not determine the leading
correction to the Rolloos--Arslan approximation.  Section \ref{sec:numerics}
reports numerical tests under Heston and rough Bergomi dynamics, as well as on SPX market
data. Section \ref{sec:conclusion} concludes.

Code and tutorial notebooks accompanying this paper are available in the
GitHub repository \url{https://github.com/fbourgey/bergomi-guyon/}.

\section{The forest expansion}\label{sec:forests}
Following \cite{alos2020exponentiation}, for two continuous semimartingales $A$ and $B$ with integrable covariation process $\langle A, B\rangle$, the \textit{diamond product} is defined by $(A \dm B)_t(T) := \E[\langle A,B\rangle_{t,T} | \cF_t] = \bE[\langle A,B\rangle_{T} | \cF_t] - \langle A,B\rangle_{t}$. When $t=0$, we write $(A \dm B)(T) := (A \dm B)_0(T)$. From \cite[Corollary 3.1]{alos2020exponentiation}, we have the following result.
\begin{cor}\label{cor:CGF}
The cumulant generating function (CGF) is given by
\beq
\psi(T;u) := \log \ee{e^{\ui u X_{0,T}}} = - \frac{1}{2}u\,(u+\ui)\,M(T) + \sum_{\ell=1}^\infty\,\tilde {\mF}_\ell(u),
\label{eq:CGFmF}
\eeq
where the quantities $\tmF_\ell$ satisfy the recursion 
\tkz $\tmF_0 = -\tfrac 1 2 u\,(u+\ui)\,M = -\tfrac 1 2 u\,(u+\ui)\,\M$
\tikzexternaldisable
and for $\ell>0$,
\beq
\tmF_\ell=\frac12\,\sum_{j=0}^{\ell-2}\,\bigl(\tmF_{\ell-2-j} \dm \tmF_j\bigr) +\ui\, u\, \bigl(X \dm \tmF_{\ell-1}\bigr).
\label{eq:tmFrecursion}
\eeq
\end{cor}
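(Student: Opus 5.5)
The plan is the exponentiation argument of \cite{alos2020exponentiation}: build an exponential martingale, apply a diamond-product version of Itô's formula, and read off the recursion \eqref{eq:tmFrecursion} by grading in powers of the volatility of volatility. Fix $T$ and $u$ and set
\[
M_t := \E[\langle X\rangle_{t,T}\,|\,\cF_t] = \int_t^T \xi_t(s)\,ds .
\]
For a candidate adapted process $\Lambda_t$ with $\Lambda_T=0$, consider
\[
Y_t := \exp\bigl(\ui u X_t - \tfrac12 u(u+\ui)M_t + \Lambda_t\bigr).
\]
I will choose $\Lambda_t = \sum_{\ell\ge1}\tmF_{\ell,t}$, where $\tmF_{\ell,t}$ is the time-$t$ version of the recursion, with diamonds $(\cdot\dm\cdot)_t(T)$. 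At $t=T$ the exponent reduces to $\ui u X_T$, since $M_T=0$ and $\Lambda_T=0$. So if $Y$ is a true martingale, then $\E[e^{\ui uX_T}]=Y_0$, and taking logs gives \eqref{eq:CGFmF}.

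The core step is to show that $Y$ is a martingale, i.e.\ that the finite-variation part of $d\log Y$ plus half the quadratic variation of $\log Y$ vanishes. I will use the Itô identity for conditional expectations of the form $A_t=\E[\int_t^T(\cdot)\,|\,\cF_t]$: their drift equals minus the integrand. This gives:
\begin{itemize}
\item $dM_t = -V_t\,dt + dm_t$, with $m$ a martingale, so the drift from $\ui u X$ and $-\frac12u(u+\ui)M$ cancels the $\frac12(\ui u)^2V_t$ term. This is the classical computation giving $e^{\ui uX_t-\frac12u(u+\ui)M_t}$ up to cross terms.
\item The remaining cross terms are $\frac12\,d\langle Z\rangle_t$ with $Z=\ui uX+\tmF_0+\Lambda$.
\end{itemize}
Each $\tmF_\ell$ is a diamond product, hence a conditional expectation of a remaining covariation. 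By the same identity, its drift is $-\frac{d}{dt}\langle\cdot,\cdot\rangle_t$ of the corresponding pair. I then expand $\frac12 d\langle \ui uX+\sum_{j\ge0}\tmF_j\rangle$ and group the terms by grade. Assign grade $1$ to $X$ and grade $j+2$ to $\tmF_j$, equivalently count powers of the vol-of-vol. In the term of grade $\ell+2$ the pairs are $(\tmF_i,\tmF_j)$ with $i+j=\ell-2$, weighted $\frac12$, and $(X,\tmF_{\ell-1})$, weighted $\ui u$. This is exactly what the drift of $\tmF_\ell$ must cancel, and it is precisely recursion \eqref{eq:tmFrecursion}. The symmetric sum over $j$ accounts automatically for the factor 2 in cross terms.

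The main obstacle is analytic rather than algebraic. One must justify that the infinite sum $\Lambda$ converges, that it may be differentiated termwise, and that the local martingale $Y$ is a true martingale. Following \cite{alos2020exponentiation,friz2020forests}, I would first treat the statement as an identity of formal power series in a vol-of-vol scaling parameter $\epsilon$ (replace $f$ by $\epsilon f$). Truncating at order $n$ gives a finite sum, and the martingale property then holds up to $O(\epsilon^{n+1})$, controlled by moment bounds. Convergence of the full series, under integrability assumptions on $\langle A,B\rangle$, is then quoted from \cite[Corollary 3.1]{alos2020exponentiation}, of which this statement is a restatement.
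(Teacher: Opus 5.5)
Your proposal is correct and is essentially the derivation underlying the result: the paper gives no proof of its own but imports the statement from \cite[Corollary 3.1]{alos2020exponentiation}, and your exponential-martingale argument checks out and reproduces recursion \eqref{eq:tmFrecursion} exactly (the drifts of $\ui u X$ and $\tmF_0$ absorb $\tfrac12(\ui u)^2V_t\,dt$, and the drift $-d\langle A,B\rangle_t$ of each diamond cancels, grade by grade, the matching cross terms of $\tfrac12\,d\langle \ui uX+\sum_{j\ge0}\tmF_j\rangle_t$). As in the paper, you rightly defer the analytic content to \cite{alos2020exponentiation,friz2020forests}, namely convergence of the series and the true-martingale property, or their $\cO(\epsilon^{n+1})$ truncated versions.
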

\noindent In particular, \tkz
\beas
\psi(T;u-\ui/2)= - \tfrac{1}{2}(u^2+\tfrac14)\,M(T) + \sum_{\ell=1}^\infty\,\tilde {\mF}_\ell(u-\ui/2).
\label{eq:CGFmFm12}
\eeas
Note also that each tree in the forest $\tmF_\ell$ has $\ell+2$ leaves, where $X = \X$ is counted as a single leaf and $M= \M = \XXd$ is counted as a double leaf.  It is clear from the recursion \eqref{eq:tmFrecursion} that each such tree picks up a factor $- \tfrac{1}{2}(u^2+\tfrac14)$ for each $\M$ leaf and a factor $\ui (u-\tfrac {\ui}2) = \ui\, u +\tfrac12$ for each $\X$ leaf.  Since every tree has at least one $\M$ leaf, the final expression for the forest has an overall factor of $- \tfrac{1}{2}(u^2+\tfrac14)$.  In our graphical notation, the root of each subtree represents the diamond operation; for example, $\MXd$ represents $M\dm X$ while $\MXdXd$ represents $(M \dm X) \dm X$.
\subsection{Forests}\label{sec:forests6}
We list here trees appearing in the first few forests, up to order $\ell=5$, with their symmetry factors: 
\beas
 \tmF_0&:& \M\\
 \tmF_1&:& \MXd\\
\tmF_2&:& \left \{\MXdXd,\tfrac12\,\MMd\right\}\\
\tmF_3&:& \left \{\MXdXdXd ,\tfrac12\,\MMdXd ,\MXdMd \right\}\\
\tmF_4&:& \left \{ \MXdXdXdXd, \tfrac
12\,\MMdXdXd, \MXdMdXd,
\MXdXdMd, \tfrac12\,\MMdMd, \tfrac12\,\MXdMXdd  \right\}\\
\tmF_5&:& \left \{ \MXdXdXdXdXd, \tfrac12\,\MMdXdXdXd, \MXdMdXdXd, \MXdXdMdXd, \tfrac12\,\MMdMdXd, \tfrac12\,\MXdMXddXd,
\MXdXdXdMd, \tfrac12\,\MMdXdMd, \MXdMdMd, \MXdXdMXdd, \tfrac12\,\MMdMXdd  \right\}.
\eeas

\subsection{The smile}
Regarding the forest expansion \eqref{eq:CGFmF} as a formal power series
in $\epsilon$ whose power counts the forest index $\ell$, the implied
total variance smile $\Sigma(k):=\sigma_{\BS}^2(k,T)\,T$ has an expansion
of the same form,
\beq
\Sigma(k) =  \sum_{\ell=0}^\infty\,\epsilon^\ell\, a_\ell(k),
\label{eq:BG}
\eeq
analogous to the BG smile expansion
\cite{bergomi2012stochastic}, with
\tkz $a_0(k) = M(T)=\M$.
The mapping from the CGF to the coefficients $a_\ell(k)$ is carried out in
the companion paper \cite{bourgey2026demystifying}: each tree or product
of trees enters $a_\ell(k)$ with a model-independent polynomial
prefactor, and the $a_\ell$ are provided explicitly through order $6$ in
the Mathematica file \texttt{aBG.m}.  The number of trees grows rapidly
with the order: $\tmF_6$ already has $23$.
\begin{prop}\label{ass:degree}
In the smile expansion \eqref{eq:BG}, for every $\ell\ge1$, each
prefactor in $a_\ell(k)$, of a single tree or of a product of trees, is
a polynomial in $k$ of degree exactly $\ell$.
\end{prop}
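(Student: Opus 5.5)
The plan is to expand the Black--Scholes call price in the implied total variance around $\Sigma=M$ and match it order by order against the price computed from the CGF \eqref{eq:CGFmF}, tracking degrees in $k$. The key is the standard Fourier (Lewis) representation. Write $\Phi(k)=\E[(e^{X_{0,T}}-e^k)^+]$ as
\[
\frac{1}{2\pi}\int_{\bR} \frac{e^{\,\ui u k}\, e^{\psi(T;u-\ui/2)}}{u^2+\tfrac14}\,du ,
\]
up to the usual normalising terms. In the Black--Scholes model with total variance $\Sigma$ one has $\psi_{\BS}(u-\ui/2)=-\tfrac12(u^2+\tfrac14)\Sigma$. The smile is therefore defined by the identity
\[
\int \frac{e^{\ui uk}}{u^2+\tfrac14}\Bigl(e^{-\frac12(u^2+\frac14)\Sigma(k)}-e^{-\frac12(u^2+\frac14)M+\sum_\ell\epsilon^\ell\tmF_\ell(u-\ui/2)}\Bigr)du=0 .
\]

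First I rewrite each forest term using the remark after Corollary~\ref{cor:CGF}. A tree with $m$ leaves of type $\M$ and $x$ leaves of type $\X$ contributes $(-\tfrac12(u^2+\tfrac14))^{m}(\ui u+\tfrac12)^{x}$ times a model-dependent number, with $2m+x=\ell+2$ and $m\ge1$. Products of $r$ trees from $\tmF_{\ell_1},\dots,\tmF_{\ell_r}$ carry the same kind of polynomial in $u$.

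Next I express everything through derivatives of the Black--Scholes price in $k$. Multiplication by $(\ui u+\tfrac12)$ corresponds to a first-order differential operator $D$ in $k$ acting on $P_{\BS}(k,M)$. Multiplication by $-\tfrac12(u^2+\tfrac14)$ corresponds to $\partial_M$, which equals $\tfrac12(\partial_k^2-\partial_k)$. The key identity is that $\partial_M^j D^{x} P_{\BS}$ divided by $\partial_M P_{\BS}$ is a polynomial in $k$ (with coefficients in $1/M$). Its degree in $k$ is exactly $j+x-1$ minus the pure-vega part. Concretely, $\partial_k^n$ of the Gaussian vega $\propto e^{-(k-M/2)^2/(2M)}$ gives a Hermite polynomial of degree $n$ in $k$. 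In particular, $D^{x}\partial_M^{m-1}$ contributes degree $x+2(m-1)$. For a single tree this is $\ell+2-2=\ell$, with leading coefficient a nonzero power of $-1/M$ because Hermite polynomials are monic.

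Then I invert the relation $\Sigma\mapsto P_{\BS}(k,\Sigma)$ order by order in $\epsilon$, using a Lagrange/Faà di Bruno expansion. At order $\ell$, $a_\ell$ is the linear term, equal to the single-tree contributions just computed, plus nonlinear corrections. The corrections involve products $a_{\ell_1}\cdots a_{\ell_r}$ with $\sum\ell_i=\ell$, multiplied by $\partial_\Sigma^{r}P_{\BS}/\partial_\Sigma P_{\BS}$, which is a polynomial in $k$ of degree $2(r-1)$. Each such term corresponds to a product of $r$ trees. By induction, the degree of its prefactor is $\sum \ell_i$ from the $a_{\ell_i}$, plus $2(r-1)$ from the vega ratio, minus the compensating degree. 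That compensation comes from the exponential cross terms $e^{\sum\tmF}$, whose product term has $u$-polynomial degree $\sum(\ell_i+2)$ and hence $k$-degree $\sum\ell_i+2r-2$.

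The main obstacle is exactly this bookkeeping for products of trees. I must show that, after collecting the exponential's cross terms and the inversion's nonlinear terms for the same product of trees, the top-degree $k$ coefficients do not cancel and the total degree is $\ell$, not $\ell+2(r-1)$. I would attack it by working with the leading-order scaling $k\sim\lambda$, $M$ fixed and $\lambda\to\infty$, where only the top Hermite coefficients survive. In this limit $\partial_\Sigma^r P/\partial_\Sigma P\sim(k^2/(2M^2))^{r-1}$. The product-of-trees contribution then reduces to comparing $e^{\sum c_i k^{\ell_i+2}/M^{\cdots}}$ against the expansion of $e^{(\Sigma-M)k^2/(2M^2)}$. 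The $k^{2}$-weighted terms cancel precisely, as the cumulant structure is logarithmic. What remains is the next-order term of the Hermite polynomials, which carries a degree drop of exactly $2(r-1)$ and a nonzero explicit coefficient. Verifying this nonvanishing coefficient is where I expect the real work; I would check it against the explicit order-$\le6$ expressions in \texttt{aBG.m} before writing the general argument.
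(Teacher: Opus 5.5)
\paragraph{Verdict: genuine gap.} Your proposal proves the single-tree case, but it does not prove the case of products of trees, which is the nontrivial content of the proposition.

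Your single-tree argument is correct and complete. Set $\delta=\Sigma-M$ and $V=\partial_M P_{\BS}\propto e^{-(k-M/2)^2/(2M)}$. The nonlinear terms in the inversion of $\Sigma\mapsto P_{\BS}(k,\Sigma)$ only produce products of at least two trees. So the prefactor in $a_\ell$ of a single tree with $m$ leaves $M$ and $x$ leaves $X$ is exactly $D^{x}\partial_M^{m-1}V/V$. This has degree $x+2(m-1)=\ell$ and a nonzero leading coefficient. (Your earlier ``exactly $j+x-1$'' should read $2(j-1)+x$.)

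\paragraph{Where the argument breaks.} Take a product of $r\ge2$ trees of total order $\ell$. Both the cross terms of $e^{\sum_\ell\epsilon^\ell\tmF_\ell}$ and the inversion terms $\frac{1}{n!}\,\delta^n\,\partial_M^{n-1}V$ reach degree $\ell+2r-2$. You must therefore show two things. First, \emph{every} coefficient of $k^{\ell+1},\dots,k^{\ell+2r-2}$ cancels, odd degrees included, since these polynomials have no parity in $k$. Second, the coefficient of $k^\ell$ does not cancel.

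Your comparison of $e^{\sum c_i k^{\ell_i+2}/M^{\cdots}}$ with $e^{\delta k^2/(2M^2)}$ freezes $u$ at the Gaussian saddle $u\approx\pm\ui k/M$. That yields only the cancellation of the very top degree, and at that level $\delta$ comes out \emph{linear} in the trees. The degree-$\ell$ coefficients of products arise $2r-2$ degrees further down, from fluctuations about the saddle. A single subleading Hermite term lowers the degree by at most two, so it cannot be the mechanism once $r\ge3$.

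Deferring non-vanishing to a comparison with \texttt{aBG.m} checks only finitely many orders, whereas the statement is for every $\ell\ge1$. The paper does not prove the proposition itself either: it cites the companion paper and notes exactly this finite check.

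\paragraph{What a complete argument needs.} The missing step is a formal saddle-point (Legendre) argument.
\begin{itemize}
\item Rescale $k=\lambda\kappa$ and $T\mapsto\lambda^{-\ell_T}T$ for each tree $T\in\tmF_{\ell_T}$. Then ``degree $\le\ell$'' for every monomial is equivalent to $\Sigma(\lambda\kappa)=\cO(1)$ as $\lambda\to\infty$, monomial by monomial.
\item Expanding about $u=\ui k/M$ does not prove this. The term linear in the Gaussian fluctuation regenerates contributions of order $\lambda^2$ that are quadratic in the trees, so you must recentre at the perturbed saddle.
\item Doing so identifies the top-degree part of $\Sigma$ with $M/(2J)$, where
\[
J(\alpha)=\sup_w\Bigl\{w-\tfrac12 w^2-\sum_{\ell\ge1}\alpha_\ell\,w^{\ell+2}\Bigr\},\qquad
\alpha_\ell=\frac{c_\ell\,k^\ell}{M^{\ell+1}},\qquad
c_\ell=\sum_{T\in\tmF_\ell}s_T\,2^{-m_T}\,T .
\]
Here $s_T$ is the symmetry factor and $m_T$ the number of leaves $M$. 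Equivalently, $\Sigma\approx k^2/(2I(k))$ with $I$ the Legendre transform of the top-degree part of the CGF.
\end{itemize}

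The exact-degree claim then becomes the statement that every Taylor coefficient of $1/(2J)$ in the $\alpha_\ell$ is nonzero. At second order,
$1/(2J)=1+2\sum_\ell\alpha_\ell+4\bigl(\sum_\ell\alpha_\ell\bigr)^2-\bigl(\sum_\ell(\ell+2)\alpha_\ell\bigr)^2+\cO(\alpha^3)$.
Its coefficients $4-(\ell+2)^2$ and $8-2(\ell+2)(\ell'+2)$ are all negative. For example, $-5$ for $\alpha_1^2$ reproduces the leading coefficient $-5k^2/(4M^3)$ of $(M\dm X)^2$ in \eqref{eq:BG3}. Likewise, $-16$ for $\alpha_1\alpha_2$ reproduces the $k^3$ coefficients of the two mixed order-three products there.

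For general $r$ you need an argument valid at every order; a sign pattern in $r$ would suffice. Your proposal supplies neither this nor the cancellation step.
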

\noindent
The proposition is proved in \cite{bourgey2026demystifying}; the
propositions of Section \ref{sec:magic-strikes} use $\ell\le5$, so at
every order used in this paper it may also be verified directly from
the Mathematica notebook \texttt{MagicStrikes.nb}.
\subsection{Explicit smile expansion through third order}
For concreteness, we reproduce from \cite{bourgey2026demystifying} the expansion through third order:
\tkz
\begin{equation}\label{eq:BG3}
\begin{aligned}
\Sigma(k)
&= 
\M +\epsilon  \left(\frac{k}{M}+\frac{1}{2}\right)\, \MXd 
+ 
\frac{1}{4} \epsilon^2 \left(\frac{k^2}{M^2}-\frac{1}{M}-\frac{1}{4}\right)\,\MMd 
\\
&\qquad+\epsilon^2 \left(\frac{k^2}{M^2}+\frac{k}{M}-\frac{1}{M}+\frac{1}{4}\right)\,\MXdXd 
+\frac{ \epsilon^2 }{4 M}\,\left(-\frac{5 k^2}{M^2}-\frac{2 k}{M}+\frac{3}{M}+\frac{1}{4}\right)\,(\MXd)^2
\\
&\qquad+\epsilon ^3 \left(\frac{k^3}{2 M^3}+\frac{k^2}{4 M^2}-\frac{3 k}{2 M^2}-\frac{k}{8 M}-\frac{1}{4 M}-\frac{1}{16}\right) \left[\tfrac{1}{2}\,\MMdXd+\MXdMd\right]
\\
&\qquad+ \epsilon ^3 \left(\frac{k^3}{M^3}+\frac{3 k^2}{2 M^2}-\frac{3 k}{M^2}+\frac{3 k}{4 M}-\frac{3}{2 M}+\frac{1}{8}\right)\,\MXdXdXd \\
&\qquad+\tfrac{1}{2} \epsilon ^3 \left(-\frac{2 k^3}{M^4}-\frac{k^2}{2 M^3}+\frac{7 k}{2 M^3}+\frac{k}{4 M^2}+\frac{1}{4 M^2}\right) \MMd\, \MXd
\\
&\qquad+\epsilon ^3 \left(-\frac{4 k^3}{M^4}-\frac{7 k^2}{2 M^3}+\frac{7 k}{M^3}-\frac{k}{2 M^2}+\frac{2}{M^2}+\frac{1}{8 M}\right) \MXd \,  \MXdXd
\\
&\qquad+\frac{1}{6}  \epsilon ^3 \left(\frac{39 k^3}{2 M^5}+\frac{45 k^2}{4 M^4}-\frac{24 k}{M^4}+\frac{3 k}{8 M^3}-\frac{9}{2 M^3}-\frac{3}{16 M^2}\right)\,(\MXd)^3 + \cO(\epsilon^4).
\end{aligned}
\end{equation}
Each diamond tree represents a model-dependent iterated integral which is in general hard to compute.  The polynomial prefactors are model-independent.
\section{Magic strikes}\label{sec:magic-strikes}
To first order in $\epsilon$, \eqref{eq:BG3} reads
$$
\Sigma(k) = \M +\epsilon  \Bigl(\frac{k}{M}+\frac{1}{2}\Bigr)\, \MXd +\cO(\epsilon^2).
$$
This gives the lowest-order approximation
\[
\Sigma\left(-\tfrac12 M \right) = M + \cO(\epsilon^2).
\]
To second order, it is straightforward to verify that
\[
    \frac 12\,\Sigma\left(-\tfrac12 M - \sqrt{M}\right)+\frac 12\,\Sigma\left(-\tfrac12 M + \sqrt{M}\right) = M - \frac {\epsilon^2}2 \frac{ (\MXd)^2}{M^2}+ \cO(\epsilon^3).
\]
\noindent Now, from \eqref{eq:BG3} once again,
\[
 \Sigma\left(-\tfrac12 M+\sqrt{M}\right)-\Sigma\left(-\tfrac12 M-\sqrt{M}\right)
 = \frac{2}{\sqrt{M}}\,\MXd\,\epsilon  + \cO(\epsilon^2).
\]
\noindent We have just proved the following proposition.
\begin{prop}
\begin{align}
&\frac 12\,\Sigma\left(-\tfrac12 M + \sqrt{M}\right)+\frac 12\,\Sigma\left(-\tfrac12 M - \sqrt{M}\right) \notag\\
&\qquad+\frac 1{8\,M}\,\left[\Sigma\left(-\tfrac12 M + \sqrt{M}\right)-\Sigma\left(-\tfrac12 M - \sqrt{M}\right)\right]^2 = M +\cO(\epsilon^3).
\label{eq:M2p}
\end{align}
\end{prop}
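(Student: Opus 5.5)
The plan is to substitute the second-order truncation of the smile expansion \eqref{eq:BG3} at the two strikes $k_\pm := -\tfrac12 M \pm \sqrt{M}$ and track every term through $\cO(\epsilon^2)$. It is convenient to set $x_\pm := k_\pm/M + \tfrac12 = \pm 1/\sqrt{M}$, because the first-order prefactor $\frac{k}{M}+\frac12$ is exactly $x$. The second-order prefactors should then be rewritten as polynomials in $x$:
\[
\frac{k^2}{M^2}-\frac1M-\frac14 = x^2 - x - \frac1M,\qquad
\frac{k^2}{M^2}+\frac kM-\frac1M+\frac14 = x^2-\frac1M,
\]
and similarly for the $(\MXd)^2$ prefactor, using $k/M = x - \tfrac12$.

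The first step is the symmetric average. At $x=\pm 1/\sqrt M$ the $\MXdXd$ prefactor $x^2 - 1/M$ vanishes at both strikes. The odd part of the $\MMd$ prefactor cancels in the average, and its even part is $x^2 - 1/M = 0$. The $\MXd$ term is odd in $x$ and cancels. Only the $(\MXd)^2$ term survives. Expanding
\[
-\frac{5k^2}{M^2}-\frac{2k}{M}+\frac3M+\frac14
\]
in $x$, its even part should be $-5x^2 + 3/M = -2/M$. Multiplied by $\frac{1}{4M}$, this gives $-\frac{\epsilon^2}{2}\frac{(\MXd)^2}{M^2}$, which is the stated average identity.

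The second step is the difference. To leading order it equals $\epsilon\,\MXd\,(x_+-x_-) = \frac{2}{\sqrt M}\epsilon\,\MXd + \cO(\epsilon^2)$. Squaring gives $\frac{4}{M}\epsilon^2(\MXd)^2 + \cO(\epsilon^3)$, so the $\cO(\epsilon^2)$ remainder of the difference only contributes at order $\epsilon^3$. Multiplying by $\frac{1}{8M}$ gives $+\frac{\epsilon^2}{2}\frac{(\MXd)^2}{M^2}$. This exactly cancels the surviving term from the first step, leaving $M + \cO(\epsilon^3)$ and proving \eqref{eq:M2p}.

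The main obstacle is bookkeeping in the $\epsilon$-counting. The strikes $k_\pm$ depend on $M = a_0$, which is order $\epsilon^0$, so substituting them does not reshuffle orders. One must also check that the $\cO(\epsilon^3)$ remainder in \eqref{eq:BG3} stays $\cO(\epsilon^3)$ when evaluated at these fixed, $M$-dependent strikes; this holds because its prefactors are polynomial in $k$ by Proposition~\ref{ass:degree}. Beyond that, the only genuine work is the even-part computation of the $(\MXd)^2$ prefactor, and I would double-check it directly.
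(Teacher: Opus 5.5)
Your proof is correct and follows the paper's own argument. The symmetric average at $k_\pm=-\tfrac12 M\pm\sqrt M$ leaves only $-\frac{\epsilon^2}{2}\frac{(\MXd)^2}{M^2}$, and $\frac{1}{8M}$ times the square of the leading difference $\frac{2}{\sqrt M}\,\epsilon\,\MXd$ cancels it exactly; rewriting the prefactors in $x=k/M+\tfrac12$ is just a tidy way to organize the same direct expansion.
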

\begin{proof}
Following the argument above, or by direct expansion in $\epsilon$.
\end{proof}
\noindent A fixed-point iteration on \eqref{eq:M2p} gives an approximation of the value of the variance contract to second order in the forest expansion.  This approximation depends on only two strikes
$k_\pm := -\tfrac12 M \pm \sqrt{M}$, the second-order magic strikes.
\subsection{Higher-order expansions}
To go to a higher order, note that the BG smile expansion truncated at order $n$ is a polynomial in $k$ of degree $n$ (Proposition \ref{ass:degree}), of the form
\beq
\Sigma(k) = \sum_{j=0}^n\, \Sigma_j\,(k-k_0)^j + \cO(\epsilon^{n+1}),
\qquad
\Sigma_j := \frac{\Sigma^{(j)}(k_0)}{j!}.
\label{eq:taylor}
\eeq
Moreover, \eqref{eq:M2p} is of this form with the Taylor coefficients $\Sigma_j$, $j\in\{1,2\}$, replaced by finite difference approximations. Exploiting this observation,
by expressing polynomial expressions in $k$ by combinations of diamond trees, we can go to higher order than \eqref{eq:M2p}.
To do this, we need finite difference approximations of the derivatives $\Sigma^{(j)}(k_0)$ in terms of the total variance smile.
In particular, increasing the order should improve the approximation
precisely to the extent that the smile is well fitted, over the width of
the stencil, by a polynomial of the corresponding degree.

\begin{defn}\label{def:finite-differences}
Given a center $k_0$ and spacing $\Delta k>0$, define
\begin{equation}
\label{eq:dj}
\begin{aligned}
    d_0 &:= \Sigma(k_0),
    \\
    d_1 &:= \frac{1}{2\,\Delta k} \left[\Sigma(k_0+\Delta k) -\Sigma(k_0-\Delta k)\right],
    \\
    d_2 &:= \frac{1}{2\,\Delta k^2} \left[\Sigma(k_0+\Delta k) +\Sigma(k_0-\Delta k)-2\,\Sigma(k_0)\right],
    \\
    d_3 &:= \frac{1}{12\,\Delta k^3}
    \left[\Sigma(k_0+ 2\Delta k)- 2\,\Sigma(k_0 + \Delta k) + 2\,\Sigma(k_0 - \Delta k)
    -\Sigma(k_0 - 2 \Delta k)\right],
    \\
    d_4 &:= \frac{1}{12\,\Delta k^2}
    \left[
        \Sigma(k_0 + 2\Delta k) - 4\,\Sigma(k_0 + \Delta k) + 6\,\Sigma(k_0)
        - 4\,\Sigma(k_0 - \Delta k) + \Sigma(k_0 - 2\Delta k)
    \right].
\end{aligned}
\end{equation}
\end{defn}

The normalization of $d_4$ is intentional. For a quartic smile with Taylor coefficient $\Sigma_4$, the fourth central difference is $24(\Delta k)^4\Sigma_4$, and hence $d_4=2(\Delta k)^2\Sigma_4$, rather than $\Sigma_4$.

\begin{remark}
The magic strikes are $k_0+j\,\Delta k$, $j\in\{0,\pm1,\pm2\}$.
\end{remark}

\subsection{The variance contract}
\begin{prop}\label{prop:M}
Let $k_0 = -\tfrac 12 M$ and $\Delta k=\sqrt{M}$ in Definition \ref{def:finite-differences}. Then, we have the following expressions for $M$ up to fifth order in the forest expansion:
\begin{equation}
\label{eq:M12345}
\begin{aligned}
M & =d_0   + \cO(\epsilon^2)=:M_1+ \cO(\epsilon^2),
\\
M &  = M\,d_2 + \tfrac12\,d_1^2  +M_1 + \cO(\epsilon^3)=:M_2+ \cO(\epsilon^3),
\\
M & =-M\,d_2\,d_1 - \frac14\,d_1^3+M_2  + \cO(\epsilon^4) =: M_3+ \cO(\epsilon^4),
\\
M & =M\,d_4 +5 M\,d_3\,d_1 +2 M \,d_2^2 +\frac14 (10+3 M)\,d_2\,d_1^2 +
\frac 18 \,d_1^4 + M_3 + \cO(\epsilon^5)
=: M_4 + \cO(\epsilon^5),
\\
M & = -\frac52 M\, d_4 d_1 - 5 M^2\, d_3 \,d_2 -\frac{21}{2}M\, d_3 d_1^2 -10 M\, d_2^2 d_1
- \frac{1}{2}(9 +M)\,d_2 d_1^3\\
&\qquad\qquad\qquad\qquad\qquad\qquad\qquad\qquad\qquad\qquad- \tfrac{1}{16} d_1^5 + M_4 + \cO(\epsilon^6)
=: M_5 + \cO(\epsilon^6).
\end{aligned}
\end{equation}
\end{prop}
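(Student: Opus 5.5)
The proof is a direct, order-by-order expansion in $\epsilon$, carried out the same way as the second-order argument preceding \eqref{eq:M2p}. By Proposition \ref{ass:degree}, the BG expansion \eqref{eq:BG} truncated at order $n\le5$ is a polynomial in $k$ of degree $n$. Its coefficients are explicit polynomials in $k$, $1/M$ and the diamond trees, available from \texttt{aBG.m} through order 6. So each $d_j$ of Definition \ref{def:finite-differences} is a finite linear combination of $\Sigma$ at the five magic strikes $k_0+j\sqrt{M}$, $j\in\{0,\pm1,\pm2\}$, with $k_0=-\tfrac12M$. Each $d_j$ therefore has an explicit expansion
\[
d_j=\sum_{\ell\ge0}\epsilon^\ell d_{j,\ell}
\]
in which every $d_{j,\ell}$ is a combination of products of trees with $\ell$ diamonds in total.

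The useful structural facts are these. For $j\ge1$ we have $d_{j,0}=0$, since $a_0=M$ is constant in $k$. Moreover $d_{j,\ell}=0$ for $\ell<j$, because $a_\ell$ has degree $\ell$ and the stencils for $d_1,d_2,d_3$ are exact on polynomials of degree up to $j$. The stencil for $d_4$ annihilates cubics, so $d_4$ is also $\cO(\epsilon^4)$.

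Hence, for the claim at order $n$, a monomial $d_1^{m_1}d_2^{m_2}d_3^{m_3}d_4^{m_4}$ contributes only from order $m_1+2m_2+3m_3+4m_4$ onward. This explains the graded form of \eqref{eq:M12345}: $M_n-M_{n-1}$ consists of the monomials of weight $n$.

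I would then prove each line inductively. Suppose $M_{n-1}=M+R_n\epsilon^n+\cO(\epsilon^{n+1})$. I compute the residual $R_n$ explicitly as a combination of the order-$n$ forest products, using \eqref{eq:BG3} for $n\le3$ and \texttt{aBG.m} for $n=4,5$. Separately, I compute the leading $\epsilon^n$ coefficient of each weight-$n$ monomial in the $d_j$. These are products of the leading pieces $d_{j,j}$ (and $d_{4,4}$), with lower-weight monomials carried to the needed order.

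The claim at order $n$ is the identity $R_n+\sum c_\alpha\,[\text{weight-}n\text{ monomial}]_n=0$. Here the left side is a linear combination of distinct products of trees, and the identity is understood as an identity between formal symbols. Matching coefficients of each distinct tree product, namely $\MXd$, $\MXdXd$, $\MMd$, $(\MXd)^2$, and so on, gives a linear system for the coefficients $c_\alpha$. Verifying that the stated coefficients solve it proves that line.

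As a sanity check of the mechanism, take $n=1$: $d_0=\Sigma(-M/2)$, and the $\epsilon$ term of \eqref{eq:BG3} vanishes at $k=-M/2$. For $n=2$, the correction $M d_2+\tfrac12 d_1^2$ reproduces exactly the computation leading to \eqref{eq:M2p}: $Md_2$ cancels the $-\tfrac{\epsilon^2}{2}(\MXd)^2/M^2$ residual after using $d_{1,1}=\MXd/M$, and $\tfrac12d_1^2$ accounts for the rest. Note also that the $d_1^2/(8M)\cdot(2\sqrt{M})^2$ term in \eqref{eq:M2p} equals $\tfrac12 d_1^2$.

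The main obstacle is bookkeeping at orders 4 and 5. There the residual $R_n$ involves every tree product of $\tmF_4,\tmF_5$ and their products with lower forests (several dozen symbols). In addition, the cross terms from lower-weight monomials must be expanded to subleading order: for example, $d_1^2$ at $\epsilon^4$ needs $d_{1,3}$, and $Md_2d_1$ at $\epsilon^5$ needs $d_{2,3}$ and $d_{1,3}$. It is not a priori obvious that a solution exists with only the listed monomials. I would carry this out symbolically in \texttt{MagicStrikes.nb}. I would organize the work by noting that, since everything is polynomial in $k$, the $d_j$ are exact linear functionals of the Taylor coefficients $\Sigma_i$ in \eqref{eq:taylor} at $k_0$: for instance $d_1=\Sigma_1+M\Sigma_3$, $d_2=\Sigma_2+M\Sigma_4$, $d_3=\Sigma_3+\cdots$. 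This reduces the verification to an identity expressing $M$ through the $\Sigma_i$, which can be checked coefficientwise in the trees.
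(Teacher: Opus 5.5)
Your proposal is correct and follows the paper's own proof, which is simply ``direct expansion in $\epsilon$'' carried out with computer algebra; your weight grading of the $d_j$-monomials and coefficient matching over tree products make that computation explicit. One small slip in your $n=2$ sanity check: $M\,d_2$ removes the $M\dm M$ and $(M\dm X)\dm X$ parts of $d_0-M$, and $d_0+M\,d_2=\tfrac12\Sigma(k_+)+\tfrac12\Sigma(k_-)$ still leaves the residual $-\tfrac{\epsilon^2}{2}(M\dm X)^2/M^2$, which it is $\tfrac12 d_1^2$ (via $d_{1,1}=(M\dm X)/M$) that cancels, not $M\,d_2$.
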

\begin{proof}
The result follows from direct expansion in
$\epsilon$.\footnote{The computation is straightforward but best carried out with a computer algebra system such as Mathematica\textsuperscript{\textregistered}.}
\end{proof}
\begin{remark}
Since $\Delta k^2=M$, the central value $\Sigma(k_0)$ cancels in
$d_0+M d_2$, so $M_2$ uses only the two strikes $k_0\pm\Delta k$.
The third-order correction reintroduces $\Sigma(k_0)$ through
$-M d_2d_1$, so $M_3$ also uses the central strike.
Thus $M_1,\ldots,M_5$ use one, two, three, five, and five strikes,
respectively, drawn from $k_0+j\,\Delta k$, $j\in\{0,\pm1,\pm2\}$:
$M_1$ uses $k_0$, while $M_4$ and $M_5$ use all five.
\end{remark}
\begin{remark}
The expression $M_1$ corresponds to the lowest-order approximation
$M(T)\approx\sigma_-(0)^2$ in \eqref{eq:approx0}; its square root is the
Rolloos--Arslan approximation of the volatility contract (Section
\ref{sec:convexity}).
\end{remark}
Fixed-point iterations of the expressions \eqref{eq:M12345} give approximations to the value of the variance contract to various orders in the forest expansion. Again, these approximations depend only on a very small number of (magic) strikes.
Although we could proceed to still higher orders, doing so typically introduces strikes outside the listed range.
\subsection{The gamma contract}
The gamma contract is the variance contract under the share measure (with $S$ as
num\'eraire).  At the level of the smile, this measure change is a reflection.
\begin{lemma}\label{lem:reflection}
Let $\Sigma^-(k):=\Sigma(-k)$ denote the reflected implied total variance smile.  Then
the gamma contract for $\Sigma$ is the variance contract for $\Sigma^-$:
\[
G[\Sigma]=M[\Sigma^-].
\]
\end{lemma}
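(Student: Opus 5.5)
Here $M[\Sigma]$ and $G[\Sigma]$ denote the functionals $\int_\bR dz\,N'(z)\,\sigma_\mp^2(z)$ in the Fukasawa representations \eqref{eq:VS} and \eqref{eq:GS}, computed from a smile $\Sigma$. The proof works at the level of these representations and the maps $d_\pm$. Write $d_\pm[\Sigma](k) = -k/\sqrt{\Sigma(k)} \pm \tfrac12\sqrt{\Sigma(k)}$. For the reflected smile $\Sigma^-(k)=\Sigma(-k)$ we get
\[
d_-[\Sigma^-](k) = \frac{-k}{\sqrt{\Sigma(-k)}} - \tfrac12\sqrt{\Sigma(-k)} = -\Bigl(\frac{-(-k)}{\sqrt{\Sigma(-k)}} + \tfrac12\sqrt{\Sigma(-k)}\Bigr) = -\,d_+[\Sigma](-k).
\]
Hence the inverses are related by $g_-[\Sigma^-](z) = -\,g_+[\Sigma](-z)$. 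Substituting, $\sigma_-[\Sigma^-](z)^2 = \Sigma^-(g_-[\Sigma^-](z)) = \Sigma(g_+[\Sigma](-z)) = \sigma_+[\Sigma](-z)^2$.

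Plug this into \eqref{eq:VS} for $\Sigma^-$ and change variables $z\mapsto -z$. Since $N'$ is even, this gives $M[\Sigma^-] = \int N'(z)\,\sigma_+[\Sigma](z)^2\,dz = G[\Sigma]$ by \eqref{eq:GS}.

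We also record the probabilistic meaning, which explains the claim and justifies applying the variance-contract formulas of Proposition~\ref{prop:M} to $\Sigma^-$. The first check is that $\Sigma^-$ is the smile of a genuine model: the smile of $S_0^2/S_T$ under the share measure $\tilde{\mathbb P}$, with $d\tilde{\mathbb P}/d\mathbb P = S_T/S_0$. This follows from put--call symmetry. A call on $S$ at strike $K$ equals $(K/S_0)$ times a put on $\tilde S = S_0^2/S$ at strike $S_0^2/K$ under $\tilde{\mathbb P}$. Black--Scholes prices transform the same way, so the implied total variance of $\tilde S$ at log-strike $k$ equals $\Sigma(-k)$.

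Then $\langle \log\tilde S\rangle = \langle X\rangle$. Consequently $M[\Sigma^-] = \tilde\E[\langle X\rangle_{0,T}] = \E[(S_T/S_0)\langle X\rangle_{0,T}] = G[\Sigma]$, which agrees with the computation above.

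The only potentially delicate step is invertibility of $d_\pm$. Fukasawa's assumption that $d_\pm$ are decreasing bijections must transfer to the reflected smile. It does: the identity $d_-[\Sigma^-](k) = -d_+[\Sigma](-k)$ shows that $d_-[\Sigma^-]$ is a decreasing bijection exactly when $d_+[\Sigma]$ is. I expect no real obstacle beyond this bookkeeping.
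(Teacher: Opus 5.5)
Your proof is correct and is essentially the paper's argument: the identity $d_-^{\Sigma^-}(k)=-d_+^{\Sigma}(-k)$, its inversion to $g_-^{\Sigma^-}(z)=-g_+^{\Sigma}(-z)$, substitution into \eqref{eq:VS}, and the evenness of $N'$. Your put--call-symmetry/share-measure remark and your check that $d_-^{\Sigma^-}$ inherits the decreasing-bijection property from $d_+^{\Sigma}$ are correct supplements. The paper states the first only informally (``the gamma contract is the variance contract under the share measure'') and leaves the second implicit.
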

\begin{proof}
We have
\[
d_-^{\,\Sigma^-}(k)=\frac{-k}{\sqrt{\Sigma(-k)}}-\frac{\sqrt{\Sigma(-k)}}{2}
=-\left[\frac{k}{\sqrt{\Sigma(-k)}}+\frac{\sqrt{\Sigma(-k)}}{2}\right]
=-\,d_+^{\,\Sigma}(-k).
\]
Inverting gives $g_-^{\,\Sigma^-}(z)=-g_+^{\,\Sigma}(-z)$, and therefore
\[
\sigma_-^{\,\Sigma^-}(z)^2=\Sigma^-\!\bigl(g_-^{\,\Sigma^-}(z)\bigr)
=\Sigma\bigl(g_+^{\,\Sigma}(-z)\bigr)=\sigma_+^{\,\Sigma}(-z)^2 .
\]
Substituting into \eqref{eq:VS} and using the symmetry of $N'$,
\[
M[\Sigma^-]=\int_\bR dz\,N'(z)\,\sigma_+^{\,\Sigma}(-z)^2
=\int_\bR dz\,N'(z)\,\sigma_+^{\,\Sigma}(z)^2=G[\Sigma].
\qedhere
\]
\end{proof}
We recall from \cite{alos2020exponentiation} that
the fair value of the gamma contract is given in terms of trees by
\[
G = M+ \sum_{\ell=1}^\infty\,X^{\dm \ell}M = \M + \MXd\,\epsilon + \MXdXd\,\epsilon^2 +\MXdXdXd\,\epsilon^3 +\MXdXdXdXd\,\epsilon^4 +\MXdXdXdXdXd\,\epsilon^5+\cO(\epsilon^6).
\]
Here $X^{\dm 0}M:=M$ and, recursively,
$X^{\dm(\ell+1)}M:=X\dm(X^{\dm\ell}M)$ for $\ell \in \mathbb{N}$.
\begin{prop}\label{prop:gamma}
Let $k_0=\tfrac12G$ and $\Delta k=\sqrt{G}$ in Definition \ref{def:finite-differences}. Then, we have the following expressions for $G$ up to fifth order in the forest expansion:
\begin{equation}
\label{eq:G12345}
\begin{aligned}
G &= d_0+\cO(\epsilon^2)=:G_1+\cO(\epsilon^2),
\\
G &= G \,d_2+\tfrac12 d_1^2+G_1+\cO(\epsilon^3)=:G_2+\cO(\epsilon^3),
\\
G &= G\, d_2d_1+\tfrac14d_1^3+G_2+\cO(\epsilon^4)=:G_3+\cO(\epsilon^4),
\\
G &= G\,d_4+5 G\,d_3d_1+2 G\,d_2^2
+\tfrac14 \left(10+3G\right)d_2d_1^2
+\tfrac18d_1^4+G_3+\cO(\epsilon^5)=:G_4+\cO(\epsilon^5),
\\
G &= \tfrac52 G d_4 d_1+5 G^2\, d_3 d_2+\tfrac{21}{2}G\,d_3d_1^2
+10 G\,d_2^2d_1
+\tfrac12\left(G+9\right)d_2d_1^3
\\
&\quad\quad\quad\quad\quad\quad\quad\quad\quad\quad\quad\quad
\quad\quad\quad\quad\quad+\tfrac{1}{16}d_1^5+G_4+\cO(\epsilon^6)=:G_5+\cO(\epsilon^6).
\end{aligned}
\end{equation}
\end{prop}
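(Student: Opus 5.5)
The plan is to deduce Proposition \ref{prop:gamma} from Proposition \ref{prop:M} using the reflection Lemma \ref{lem:reflection}, with no new expansion. By the lemma, $G[\Sigma]=M[\Sigma^-]$ with $\Sigma^-(k)=\Sigma(-k)$. So I apply Proposition \ref{prop:M} to the reflected smile $\Sigma^-$ and identify the resulting finite differences with those of $\Sigma$.

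One point must be justified first. Proposition \ref{prop:M} was obtained by expanding in $\epsilon$ a smile of the form \eqref{eq:BG}, with tree coefficients coming from some model \eqref{eq:SVmodel}. I need $\Sigma^-$ to also admit such an expansion, with $M[\Sigma^-]=G$ as leading coefficient. The cleanest argument is that $\Sigma^-$ is the smile of $-X_{0,T}$ under the share measure. Proposition \ref{prop:M} is an identity in the polynomial prefactors of \eqref{eq:BG}, valid order by order for arbitrary values of the trees. Its inputs are only that the order-$n$ truncation is a polynomial of degree $n$ in $k$, with coefficients built from $M$ and the tree coefficients through the universal map of \cite{bourgey2026demystifying}. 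The share-measure CGF is $\psi(T;u-\ui)$, and under $u\mapsto -u$ it has the same forest structure with $M$ replaced by $G$. I would verify this by checking that the reflected smile satisfies the same Lee/Fukasawa structure.

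As a fallback, I would simply substitute \eqref{eq:BG3} (and its order-5 analogue from \texttt{aBG.m}) with $k\mapsto -k$. I would then re-run the Mathematica expansion, using $G=\M+\MXd\epsilon+\cdots$ to re-express $M$ in terms of $G$. This is the direct-verification route analogous to the proof of Proposition \ref{prop:M}.

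Granting this, set $k_0^-=-\tfrac12 G$ and $\Delta k=\sqrt G$ for $\Sigma^-$. Then $\Sigma^-(k_0^-+j\Delta k)=\Sigma(\tfrac12G-j\Delta k)$. So with $k_0=\tfrac12 G$, the finite differences satisfy
\[
d_j[\Sigma^-]=(-1)^j\,d_j[\Sigma].
\]
This holds because $d_0,d_2,d_4$ are symmetric stencils and $d_1,d_3$ are antisymmetric. Substituting into $M_1,\dots,M_5$ with $M$ replaced by $G$, each monomial $d_{j_1}\cdots d_{j_r}$ picks up the sign $(-1)^{j_1+\cdots+j_r}$.

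The remaining step is to check the signs against \eqref{eq:G12345}. At third order the terms $d_2d_1$ and $d_1^3$ have odd total index and flip sign. At fourth order every term ($d_4$, $d_3d_1$, $d_2^2$, $d_2d_1^2$, $d_1^4$) has even total index and is unchanged. At fifth order every term is odd and flips sign. This reproduces \eqref{eq:G12345} exactly. The main obstacle is the justification in the second paragraph that $\Sigma^-$ lies within the scope of Proposition \ref{prop:M}. If the structural argument proves delicate, the symbolic fallback settles it.
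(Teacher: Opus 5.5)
Your proposal is correct and follows the paper's proof essentially verbatim. You apply Proposition~\ref{prop:M} to $\Sigma^-$ via Lemma~\ref{lem:reflection}, use $d_j^-=(-1)^j d_j$ from the (anti)symmetry of the stencils, and track the sign $(-1)^{j_1+\cdots+j_r}$, which is the paper's $(-1)^{a+c}$. The paper simply applies Proposition~\ref{prop:M} to $\Sigma^-$ without the scope justification you flag, so your share-measure remark (or the symbolic fallback) only adds care beyond what the paper does.
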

\begin{proof}
Apply Proposition \ref{prop:M} to the reflected smile $\Sigma^-$, whose variance
contract is $G$ by Lemma \ref{lem:reflection}.  The center and spacing prescribed
there are $k_0^-=-\tfrac12G$ and $\Delta k=\sqrt{G}$, and $k_0^-$ corresponds to
the log-strike $k_0=\tfrac12G$ for $\Sigma$, as in the statement.  Since
\[
\Sigma^-\bigl(k_0^-+\beta\,\Delta k\bigr)=\Sigma\bigl(k_0-\beta\,\Delta k\bigr),
\]
and since the stencils of Definition \ref{def:finite-differences} are symmetric in
$\beta\mapsto-\beta$ for $j$ even and antisymmetric for $j$ odd, the quantities $d_j^-$ formed
from $\Sigma^-$ at $k_0^-$ and those formed from $\Sigma$ at $k_0$ are related by
\[
d_j^-=(-1)^j\,d_j ,\qquad j=0,1,2,3,4 .
\]
Substituting $M\mapsto G$ and $d_j\mapsto(-1)^jd_j$ into \eqref{eq:M12345} gives
\eqref{eq:G12345}: a monomial $d_1^{\,a}d_2^{\,b}d_3^{\,c}d_4^{\,e}$ acquires the
factor $(-1)^{a+c}$, so the terms of odd total degree in $d_1$ and $d_3$ change sign
and all others are unchanged.
\end{proof}
\begin{remark}
The expression $G_1$ corresponds to the lowest-order approximation $G(T)  \approx \sigma_+(0)^2$ in  \eqref{eq:approx0}.
\end{remark}
\subsection{Why no stencil parameter?}\label{sec:whyz}
A natural generalization of Definition \ref{def:finite-differences} replaces
$\Delta k$ by $z\,\Delta k$ for a stencil parameter $z>0$, so that the magic
strikes become $k_0+j\,z\,\Delta k$.  Nothing breaks except that from order 4, the coefficients depend on
$z$, so the formulae become more complicated.
More important is the impact on the implied option portfolio.  At the flat smile
$\Sigma(k_j)\equiv M$, where the nonlinear terms vanish, the sensitivities
$\partial M_5/\partial\Sigma(k_j)$ reduce
to the unique weights on the five strikes $k_0+j\,z\,\Delta k$ that
reproduce the Gaussian average
$\E\bigl[f\bigl(k_0+\sqrt M\,Z\bigr)\bigr]$, $Z\sim N(0,1)$, on
polynomials $f$ of degree at most $5$.  Explicitly,
\[
w_0=1-\frac{5z^2-3}{4z^4},\qquad
w_{\pm1}=\frac{4z^2-3}{6z^4},\qquad
w_{\pm2}=\frac{3-z^2}{24z^4}.
\]
These weights are nonnegative if and only if
$z\in\bigl[\tfrac{\sqrt3}2,\sqrt3\bigr]$.  Inside this interval the approximation $M_5$
is, at leading order, a convex average of implied total variances;
outside, the conditioning degrades rapidly.
Lower orders tighten the interval.  At order 2 for example, 
\[
w_0=1-\frac{1}{z^2},\qquad
w_{\pm1}=\frac{1}{2z^2},
\]
nonnegative if and only if $z\ge1$. Nonnegative weights at every order through 5 therefore require
$z\in[1,\sqrt3\,]$. Choosing $z>1$ increases the chance that some magic strikes $k_0+j\,z\,\Delta k$ lie outside the listed range.  The
choice $z=1$  is thus the natural
choice for the stencil parameter.
\subsection{Power payoffs}\label{sec:power}
For $p\in[0,1]$, the power payoff $e^{p X_{0,T}}=(S_T/S_0)^p$ is
European, and \cite[eq.~(1.3)]{fukasawa2012normalizing} gives its value as an exact
Gaussian average of the smile, generalizing the Fukasawa representations
\eqref{eq:VS} and \eqref{eq:GS}; see also \cite{de2018moment}. For $p\in(0,1)$, define the \emph{implied power variance}
$W_p=W_p(T)$ by
\[
\ee{e^{p X_{0,T}}}=e^{\lambda_p \,W_p},
\qquad
\lambda_p:=\tfrac12\,p\,(p-1).
\]
At the endpoints where $\lambda_p=0$, define $W_p$ by continuous extension:
\begin{align*}
W_0&:=\lim_{p\downarrow0}\frac{\log\ee{e^{p X_{0,T}}}}{\lambda_p}
=-2\,\ee{X_{0,T}}=M,\\
W_1&:=\lim_{p\uparrow1}\frac{\log\ee{e^{p X_{0,T}}}}{\lambda_p}
=2\,\ee{X_{0,T} e^{X_{0,T}}}=G.
\end{align*}
For $p\in(0,1)$, substituting $\ui\,u=p$ into the CGF recursion
\eqref{eq:tmFrecursion} and dividing by $\lambda_p$ gives the implied power
variance in terms of trees:
\begin{align*}
W_p &= M + p\,\MXd\,\epsilon
+\Bigl(p^2\,\MXdXd+\tfrac12\lambda_p\,\MMd\Bigr)\epsilon^2
+\left[p^3\,\MXdXdXd+p\, \lambda_p\,\left(\tfrac 12 \MMdXd +
 \MXdMd\right)
\right]\,\epsilon ^3 \\
&\qquad
+ \left[p^4\,\MXdXdXdXd +p^2\, \lambda _p\, \left(\tfrac12\,\MMdXdXd+\MXdMdXd+\MXdXdMd+\tfrac12\MXdMXdd\right)+\tfrac{1}{2}\,\lambda_p^2\,\MMdMd
\right]\,\epsilon ^4\\
&\qquad\qquad+ \bigg[
p^5\,\MXdXdXdXdXd+
p^3\, \lambda_p \left(\tfrac12\,\MMdXdXdXd+\MXdMdXdXd+\MXdXdMdXd+\MXdXdMXdd+\MXdXdXdMd+\tfrac12\,\MXdMXddXd\right)\\
&\qquad\qquad\qquad\qquad\qquad\qquad\qquad\qquad+p \,\lambda_p^2 \left(\tfrac12\,\MMdMdXd+\tfrac12\,\MMdXdMd+\MXdMdMd+\tfrac12\,\MMdMXdd\right)\bigg]\,\epsilon ^5
+\cO(\epsilon^6).
\end{align*}
In particular, $W_0=M$ and, since $\lambda_p\to0$ kills every term with more than one
leaf $\M$, $W_1=G$.  The family thus interpolates the variance and gamma contracts
through European payoffs.
\begin{prop}\label{prop:power}
Let $k_0=(p-\tfrac12)\,W_p$ and $\Delta k=\sqrt{W_p}$ in Definition
\ref{def:finite-differences}, and write
\[
\omega:=\lambda_p\,W_p,\qquad q:=2p-1 .
\]
Then, with $W:=W_p$, we have the following expressions for $W_p$ up to fifth order
in the forest expansion:
\begin{equation}
\label{eq:W12345}
\begin{aligned}
W &= d_0+\cO(\epsilon^2)=:P_1+\cO(\epsilon^2),
\\
W &= W\,d_2+\tfrac12(1+\omega)\,d_1^2+P_1+\cO(\epsilon^3)=:P_2+\cO(\epsilon^3),
\\
W &= q\left[W\,d_2d_1+\tfrac14(1+\omega)\,d_1^3\right]+P_2+\cO(\epsilon^4)
=:P_3+\cO(\epsilon^4),
\\
W &= W\,d_4+(5+2\omega)\,W\,d_3d_1+(2+\omega)\,W\,d_2^2
+\left[\tfrac52+\tfrac34 W+\tfrac{21}2\,\omega+\omega^2\right]d_2d_1^2
\\
&\qquad\qquad\qquad
+\tfrac18\left[1+11\lambda_p+(10\lambda_p+1)\,\omega\right]d_1^4
+P_3+\cO(\epsilon^5)=:P_4+\cO(\epsilon^5),
\\
W &= q\,\Bigl[\tfrac52\,W\,d_4d_1+5\,W^2d_3d_2
+\bigl(\tfrac{21}{2}+\tfrac92\,\omega\bigr)W\,d_3d_1^2
+\bigl(10+\tfrac92\,\omega\bigr)W\,d_2^2d_1
\\
&\qquad\qquad
+\bigl(\tfrac92+\tfrac12 W+12\,\omega+\tfrac32\,\omega^2\bigr)d_2d_1^3
+\tfrac1{16}\bigl(1+17\lambda_p+(14\lambda_p+1)\,\omega\bigr)d_1^5\Bigr]\\
&\qquad\qquad\qquad\qquad\qquad\qquad\qquad\qquad\qquad\qquad
+P_4+\cO(\epsilon^6)=:P_5+\cO(\epsilon^6).
\end{aligned}
\end{equation}
\end{prop}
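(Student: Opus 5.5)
The plan is the same direct expansion in $\epsilon$ used for Propositions \ref{prop:M} and \ref{prop:gamma}, now carried out with $p$ kept as a parameter. We substitute the BG smile expansion \eqref{eq:BG}, truncated at order $5$ (coefficients $a_\ell$ through $\ell=5$ from \cite{bourgey2026demystifying}), into the finite differences of Definition \ref{def:finite-differences} with $k_0=(p-\tfrac12)W$ and $\Delta k=\sqrt W$. Separately, we use the tree expansion of $W_p$ displayed just before the statement. Two reparametrizations keep the computation manageable:
\begin{itemize}
\item[(i)] write $M = W - \delta$, where $\delta := p\,\MXd\,\epsilon + \cdots$ is read off from the $W_p$ expansion;
\item[(ii)] express every polynomial prefactor in the variable $k - k_0$.
\end{itemize}
By Proposition \ref{ass:degree}, $a_\ell$ is then a polynomial of degree $\ell$ in $k-k_0$. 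For $\ell\le 5$ it is determined by its Taylor coefficients at $k_0$, and the stencils give $d_j=\Sigma_j+(\text{higher }\Sigma_{j+2},\dots)\Delta k^{2}$ corrections that we record explicitly. For example, $d_1=\Sigma_1+\Delta k^2\Sigma_3+\cdots$, and $d_4=2\Delta k^2\Sigma_4$ plus a $\Sigma_6$-level term, which is irrelevant at order $5$.

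At each order $n=1,\dots,5$ we verify the identity $W - P_n = \cO(\epsilon^{n+1})$ as follows.
\begin{itemize}
\item[(a)] Expand both sides as polynomials in the diamond trees (and products of trees), with coefficients rational in $W$ and $p$. This is done after re-expanding the $1/M^j$ factors in the prefactors around $W$ using (i), which generates products of lower trees.
\item[(b)] Check coefficient by coefficient that the difference vanishes through order $n$.
\end{itemize}
The ansatz for $P_n$ is a polynomial in $d_1,\dots,d_4$ with weights chosen so that each monomial is homogeneous of order $n$ in $\epsilon$. Since $d_j=\cO(\epsilon)$ for $j\ge1$, at order $n$ only the listed monomials (of total $\epsilon$-weight $n$) can contribute. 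The unknown coefficients, polynomial in $W$, $\omega$, $\lambda_p$ and $q$, are then determined by matching the tree coefficients of $W_p$. The equations are overdetermined, since there are more trees than monomials, so consistency is itself part of what must be checked.

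Two consistency checks constrain the result and guide the computation.
\begin{itemize}
\item[(1)] At $p=0$ we have $\lambda_p=\omega=0$ and $q=-1$. Then $W=M$, $k_0=-\tfrac12M$, and the formulas must reduce to \eqref{eq:M12345}; they visibly do.
\item[(2)] The map $p\mapsto1-p$ sends $\Sigma\mapsto\Sigma^-$ (the power payoff is the share-measure counterpart), leaves $\lambda_p$ fixed, flips $k_0$ and $q$, and sends $d_j\mapsto(-1)^jd_j$, exactly as in the proof of Proposition \ref{prop:gamma}.
\end{itemize}
Check (2) explains why the odd-order corrections carry the factor $q$ while the even-order corrections are symmetric. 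It also halves the verification: it suffices to establish the even/odd structure and the $p$-dependence through $\lambda_p$.

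The main obstacle is the bookkeeping at orders $4$ and $5$. The $\lambda_p$-dependence of the smile prefactors enters in two places:
\begin{itemize}
\item through the trees with two or more $\M$ leaves, weighted by $\lambda_p$ and $\lambda_p^2$ in $W_p$;
\item through the $p$-dependent location $k_0$.
\end{itemize}
Matching coefficients such as $\tfrac18[1+11\lambda_p+(10\lambda_p+1)\omega]$ requires tracking many tree products, for example the $11$ trees of $\tmF_5$ together with products such as $(\MXd)^2\MXdXd$. We would perform this step in computer algebra, treating each tree as an independent symbol and $p$ symbolically. We would then confirm that the residual at order $n+1$ is nonzero, but that all lower orders vanish identically in $p$.
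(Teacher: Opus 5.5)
Your proposal is correct and is essentially the paper's own proof: a direct (computer-algebra) expansion in $\epsilon$ of the stencils $d_j$ at $k_0=(p-\tfrac12)W_p$, $\Delta k=\sqrt{W_p}$, using the coefficients $a_\ell$ and the tree expansion of $W_p$. As in the paper's remark, the $p=0$ reduction and the $p\mapsto1-p$ reflection serve only as consistency checks. One small tightening: the weight counting that confines order $n$ to the listed monomials needs $d_j=\cO(\epsilon^j)$, not merely $d_j=\cO(\epsilon)$; this follows from Proposition \ref{ass:degree} together with your own stencil expansions $d_j=\Sigma_j+\cdots$ and $d_4=2\Delta k^2\Sigma_4+\cdots$.
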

\begin{proof}
The result follows from direct expansion in
$\epsilon$.
\end{proof}
\begin{remark}
At the endpoints $\lambda_p=0$, so $\omega=0$.  When $p=0$, the equations
\eqref{eq:W12345} reduce to \eqref{eq:M12345}, and when $p=1$ to
\eqref{eq:G12345}.  The overall factor $q=2p-1$ on the odd orders is the reflection
of Lemma \ref{lem:reflection}, which 
flips the
sign of the odd stencils $d_1$ and $d_3$.
\end{remark}
\section{Magic strikes at every order}\label{sec:iff}
The propositions of Section \ref{sec:magic-strikes} exhibit magic-strike approximations up to fifth order for the variance, gamma and power-payoff contracts.  In this section we prove that the construction succeeds at every order for
European and other attainable claims (Theorem \ref{thm:exist}), and in
particular for power payoffs (Corollary \ref{cor:exist}).  
As an example where the magic-strike construction cannot succeed, we examine the volatility contract from the tree perspective (Section \ref{sec:convexity}).
\subsection{Computing the approximations}\label{sec:algorithm}
In practice the approximations of Section \ref{sec:magic-strikes} are found by
successive elimination, one order at a time, as summarized in Algorithm
\ref{algo:Fn} and illustrated in the Mathematica notebook \verb+MagicStrikes.nb+.
Since $d_j=\cO(\epsilon^j)$, with leading coefficient supported on the single trees of order $j$, the monomial targeted at each pass of the loop appears, among the products $d_{j_1}\cdots d_{j_r}$ whose coefficients are not yet determined, only in the current one, and with a nonzero multiplier.  Each pass therefore solves one linear equation for one coefficient, in terms of those already found.  In particular the coefficients in Propositions \ref{prop:M}, \ref{prop:gamma} and \ref{prop:power} are the unique ones, order by order, among linear combinations of the products $d_{j_1}\cdots d_{j_r}$ with $j_1+\cdots+j_r=n$.
\RestyleAlgo{boxruled}
\LinesNumbered
\begin{algorithm}[!h]
\caption{The order-$n$ magic-strike approximation, $n\ge2$}\label{algo:Fn}
Given $P_{n-1}$, with $P_1=d_0$, write
$P_n=P_{n-1}+\sum_{j_1+\cdots+j_r=n} c_{j_1\cdots j_r}\,d_{j_1}\cdots d_{j_r}$,
with one undetermined coefficient $c_{j_1\cdots j_r}$, polynomial in $W_p$, for
each partition of $n$.\\
Substituting the smile expansion \eqref{eq:BG}, expand $W_p-P_n$ at order
$\epsilon^n$ as a combination of trees and products of trees.\\
\For{each partition $j_1+\cdots+j_r=n$, coarsest first: $r=1$, then $r=2$,
and so on}{
Among the products of trees of orders $j_1,\dots,j_r$ in $W_p-P_n$, select the
one whose factors have the greatest number of $\X$ leaves, and choose
$c_{j_1\cdots j_r}$ to eliminate it.  Trees with the same numbers of leaves
$\M$ and $\X$ have identical prefactors, and are eliminated at the same
time.\\
}
Return $P_n$; the remaining tree monomials at order $\epsilon^n$ are
eliminated automatically, and $W_p-P_n=\cO(\epsilon^{n+1})$.
\end{algorithm}

\begin{remark}
The approximations read the smile only at the magic strikes; the smile
beyond two standard deviations of $k_0$ plays no role.  In effect, the
construction fits a polynomial to the smile at the magic strikes and
prices the contract as if the smile were that polynomial.  By
Proposition \ref{ass:degree} the smile is polynomial in $k$ at each
order in $\epsilon$, so nothing is lost to the order of the expansion;
the approximation is accurate because, and to the extent that, the
smile is locally polynomial.
\end{remark}

What the elimination does not by itself guarantee is the last line of Algorithm \ref{algo:Fn}: at each pass of the loop one coefficient is available, while several trees may require elimination.  It turns out that when the tree with the greatest number of leaves $\X$ is eliminated, the remaining trees of the same order are eliminated with it.  That this always happens is guaranteed by Theorem \ref{thm:exist} below.

\subsection{Existence}\label{sec:theorem}\label{sec:properties}

The following theorem shows that a magic-strike representation exists for
every European claim.
\begin{theorem}[Existence, European claims]\label{thm:exist}
Let the payoff $g$ be measurable with
$|g(S_0e^k)|\le C e^{a|k|}$ for all $k\in\mathbb R$, for some
$C>0$ and $a\ge0$. Assume $\ee{|g(S_T)|}<\infty$, and let $V=\ee{g(S_T)}$.
For every $n\ge1$, every $k_0$ and every $\Delta k>0$, there exists an
explicit function $F^g_n$ such that
\[
V=F^g_n\bigl(\Sigma(k_0+j\,\Delta k),\ |j|\le\lceil n/2\rceil\bigr)
+\cO(\epsilon^{n+1}) .
\]
\end{theorem}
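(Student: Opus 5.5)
Since $V$ is determined by the law of $X_{0,T}$, and that law is determined by the whole smile through the Breeden--Litzenberger relation, it suffices to show that, at order $n$ in the forest expansion, the whole smile is determined by the polynomial interpolant through the $2\lceil n/2\rceil+1$ magic strikes. I would argue as follows.

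By Proposition \ref{ass:degree}, the truncation $\Sigma^{[n]}(k):=\sum_{\ell\le n}\epsilon^\ell a_\ell(k)$ is, to $\cO(\epsilon^{n+1})$, a polynomial in $k$ of degree at most $n$, with coefficients that are model-dependent tree combinations. There are $m:=2\lceil n/2\rceil+1\ge n+1$ strikes $k_0+j\Delta k$, $|j|\le\lceil n/2\rceil$. Hence Lagrange interpolation through these strikes reproduces $\Sigma^{[n]}$ exactly. Define
\[
\hat\Sigma(k):=\sum_{|j|\le\lceil n/2\rceil}\Sigma(k_0+j\Delta k)\,L_j(k),
\]
where $L_j$ is the Lagrange basis of degree $m-1$. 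Then $\hat\Sigma(k)=\Sigma(k)+\cO(\epsilon^{n+1})$ coefficientwise in $k$, since the interpolation error of a degree-$n$ polynomial at $m\ge n+1$ nodes vanishes and the coefficients $\Sigma_j$ in \eqref{eq:taylor} are $\cO(\epsilon^{j})$. This step reduces to the same observation that underlies Definition \ref{def:finite-differences}.

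Next, I would define $F^g_n$ as the price of $g$ computed from the smile $\hat\Sigma$, expanded in $\epsilon$ and truncated. Concretely, I would write $\hat\Sigma(k)=\hat d_0+\sum_{i\ge1}\hat c_i(k-k_0)^i$ with $\hat c_i=\cO(\epsilon^i)$, and expand the Black--Scholes price $C_{\BS}(k,\hat\Sigma(k))$ around the flat total variance $\hat d_0$. Each $\hat c_i$ carries weight $\epsilon^i$, so the expansion to order $n$ is a finite polynomial in the $\hat c_i$. Its coefficients are explicit functions of $\hat d_0$ and $k$, namely derivatives of Black--Scholes prices with respect to variance. The claim price is then
\[
V=g(S_0)+\int_{\mathbb R}g''\cdot(\text{OTM prices})\,dK
\]
via the Carr--Madan spanning formula, or equivalently $\int g\,\partial_K^2C\,dK$ in the distributional sense for measurable $g$. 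Integrating term by term against the Gaussian-type densities gives integrals of the form $\int g(S_0e^k)\,P(k)\,N'(d_-(k))\,(\dots)\,dk$ with $P$ polynomial in $k$. These integrals are finite because $|g(S_0e^k)|\le Ce^{a|k|}$ while the Black--Scholes density with flat variance $\hat d_0$ decays like $e^{-k^2/(2\hat d_0)}$. This produces an explicit $F^g_n$ that is polynomial in the $\hat c_i$, and so polynomial in the magic-strike values with coefficients depending on $\hat d_0$.

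The main obstacle is justifying that the true $V$ agrees with this formal expansion to $\cO(\epsilon^{n+1})$. I would handle it at the level of the formal series in $\epsilon$, as the paper does throughout. The forest expansion gives the CGF $\psi$ as a series of trees, the smile $\Sigma$ is determined order by order from $\psi$ (by the companion paper), and conversely $\psi$, hence every attainable price, is determined order by order from $\Sigma$ through the inverse map (Black--Scholes pricing of the smile). Both maps are formal power series compositions that respect the $\epsilon$-grading. Consequently, two smiles agreeing to $\cO(\epsilon^{n+1})$ give prices agreeing to $\cO(\epsilon^{n+1})$. The growth condition on $g$ and the integrability $\ee{|g(S_T)|}<\infty$ ensure that each coefficient integral converges, and the Gaussian decay permits differentiation under the integral. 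Finally, if $F^g_n$ is wanted in the fixed-point form of Section \ref{sec:magic-strikes}, I would re-express the Lagrange coefficients through the stencils $d_j$. This also explains the strike count $2\lceil n/2\rceil+1$: it is the smallest symmetric stencil with at least $n+1$ nodes.
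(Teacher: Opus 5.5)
Your proposal is correct and follows essentially the paper's proof. Your Lagrange interpolant through the stencil is the paper's reconstruction of the degree-$n$ polynomial smile via its Taylor coefficients $\Sigma_j$, and your term-by-term expansion of the Breeden--Litzenberger density about the flat Gaussian, with $|g(S_0e^k)|\le Ce^{a|k|}$ making each Gaussian-times-polynomial term integrable, is the paper's argument; the Carr--Madan route (which would need smoothness of $g$) and the CGF-inversion justification are unnecessary extras, since the paper expands $\int g(S_0e^k)\,q(k)\,dk$ directly in $\epsilon$ at the same formal level.
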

\begin{proof}
By Breeden--Litzenberger, the density of $X_{0,T}$ is an explicit function of
the smile and its first two derivatives:
\beq
V=\int_\bR g\bigl(S_0\,e^{k}\bigr)\,q(k)\,dk,
\qquad
q(k)=\frac{\gamma(k)}{\sqrt{2\pi\,\Sigma(k)}}\;e^{-\frac12 d_-(k)^2},
\label{eq:BL}
\eeq
with
\[
\gamma(k)=\Bigl(1-\frac{k\,\Sigma'(k)}{2\,\Sigma(k)}\Bigr)^{2}
-\frac{\Sigma'(k)^2}{4}\Bigl(\frac{1}{\Sigma(k)}+\frac14\Bigr)
+\frac{\Sigma''(k)}{2} .
\]
By Proposition \ref{ass:degree}, $\Sigma$ agrees to $\cO(\epsilon^{n+1})$
with a polynomial in $k$ of degree $n$, determined by its Taylor
coefficients $\Sigma_j:=\Sigma^{(j)}(k_0)/j!$, $j\in\{0,\dots,n\}$, about
$k_0$; so then are $\Sigma'$, $\Sigma''$, and hence $q$.  At leading
order the smile is flat, $\gamma=1$, and $q$ is a Gaussian density;
expanding \eqref{eq:BL} in $\epsilon$, each term is that Gaussian times a
polynomial in $k$ whose coefficients are explicit in the $\Sigma_j$, and each coefficient is integrable against $g(S_0e^k)$, since a
Gaussian density times a polynomial remains integrable after
multiplication by $e^{a|k|}$.
Hence
\[
V=\widetilde F^g_n(\Sigma_0,\dots,\Sigma_n)+\cO(\epsilon^{n+1})
\]
for an explicit function $\widetilde F^g_n$.  Finally, the Taylor
coefficients of a polynomial of degree $n$ are linear combinations of its
values at any $n+1$ points; taking the points $k_0+j\Delta k$,
$|j|\le\lceil n/2\rceil$, $\widetilde F^g_n$ becomes the $F^g_n$ of the
statement.
\end{proof}

Theorem \ref{thm:exist} extends immediately to any \emph{attainable}
claim: one replicable by a fixed portfolio of European options together
with a hedging strategy that is model-free within the class
\eqref{eq:SVmodel}, so that its value coincides with that of a European
claim.  The variance contract, with value $M=-2\,\ee{X_{0,T}}$ by
Chriss--Morokoff \cite{chriss1999market}, is the classic example.
For the power payoffs, the strikes can be centered by the claim itself: choosing $k_0$ and $\Delta k$ in terms of $W_p$, both the strikes and the coefficients of $F_n$ depend on $W_p$, and the representation becomes a fixed-point equation for $W_p$, as in Section \ref{sec:magic-strikes}.

\begin{cor}[Existence, power payoffs]\label{cor:exist}\label{thm:iff}
Let $p\in[0,1]$ and let $W_p$ be the implied power variance.  For every
$n\ge1$ there exists an explicit function $F_n$, with coefficients
explicit in $W_p$, such that
\[
W_p \;=\; F_n\bigl(\Sigma(k_0+j\,\Delta k),\ |j|\le\lceil n/2\rceil\bigr)
\;+\;\cO(\epsilon^{n+1}),
\qquad
k_0=(p-\tfrac12)W_p,\quad \Delta k=\sqrt{W_p} .
\]
\end{cor}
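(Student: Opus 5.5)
We deduce the corollary from Theorem \ref{thm:exist}, applied to the European payoff $g(S)=(S/S_0)^p$. The growth condition holds with $C=1$ and $a=p\le1$. Integrability $\ee{(S_T/S_0)^p}<\infty$ follows from Jensen's inequality for $p\in[0,1]$, since $S$ is a martingale.

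For $p\in(0,1)$, we fix $k_0$ and $\Delta k>0$ arbitrary, as the theorem allows. The theorem then yields an explicit $F^g_n$ with $V_p:=\ee{e^{pX_{0,T}}}=F^g_n(\Sigma(k_0+j\Delta k),\,|j|\le\lceil n/2\rceil)+\cO(\epsilon^{n+1})$. Next we pass to $W_p=\lambda_p^{-1}\log V_p$. Because $V_p=e^{\lambda_p M}(1+\cO(\epsilon))$ is bounded away from zero at leading order, $\log$ is smooth near the leading-order value. Taylor-expanding $\log F^g_n$ in $\epsilon$ about the flat-smile value therefore keeps the error at $\cO(\epsilon^{n+1})$.

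The endpoints $p\in\{0,1\}$ need separate treatment, since $\lambda_p=0$ there. For them we apply the theorem directly to the attainable claims $-2X_{0,T}$ and $2X_{0,T}e^{X_{0,T}}$. The log payoff does not satisfy the theorem's growth hypothesis literally, but the proof of Theorem \ref{thm:exist} only needs Gaussian-times-polynomial integrability. A polynomial-in-$k$ payoff is integrable in that sense, so the same argument goes through. This fits the paper's remark extending the theorem to attainable claims, and it sidesteps the $0/0$ limit. Alternatively, since $\widetilde F^g_n$ depends smoothly on $p$, we could take the continuous limit of $\lambda_p^{-1}\log F_n$.

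The delicate step is making $k_0$ and $\Delta k$ depend on the unknown $W_p$, namely $k_0=(p-\tfrac12)W_p$ and $\Delta k=\sqrt{W_p}$. Since the theorem holds for every $k_0$ and $\Delta k>0$, we may substitute these values. The resulting function has coefficients that are explicit in $W_p$, because the conversion from Taylor coefficients to stencil values involves only powers of $\Delta k$ and $k_0$. The subtle point is whether the $\cO(\epsilon^{n+1})$ remainder remains uniform when $k_0$ and $\Delta k$ are themselves $\epsilon$-dependent. We handle this by noting three facts:
\begin{itemize}
\item $W_p=M+\cO(\epsilon)$ is a formal power series in $\epsilon$;
\item the truncated smile polynomial is polynomial in $k$;
\item at fixed $k$, each coefficient $a_\ell(k)$ is a polynomial in $k$ whose coefficients do not depend on $k_0$.
\end{itemize}
Substituting a power series in $\epsilon$ for $k_0$ and $\Delta k$ therefore only reorganizes terms at each order. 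The identity $V_p=\widetilde F(\Sigma_0,\dots,\Sigma_n)+\cO(\epsilon^{n+1})$ is an identity of formal power series, so it survives this substitution.

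The corollary claims only existence and explicitness, not uniqueness of the fixed point. So the argument ends by noting that $W_p=F_n(\cdot)+\cO(\epsilon^{n+1})$ holds with $W_p$ appearing on both sides. That is exactly the fixed-point form of Proposition \ref{prop:power}.
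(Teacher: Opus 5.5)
Your argument is correct, but it is organized differently from the paper's.

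\textbf{The paper's route.} The paper never separates the endpoints. It writes $\log\ee{e^{pX_{0,T}}}=\Phi_n(p;\Sigma_0,\dots,\Sigma_n)+\cO(\epsilon^{n+1})$ and notes that $\Phi_n$ is polynomial in $p$ at each order. Because $\ee{1}=1$ and $\ee{e^{X_{0,T}}}=1$, each retained order of $\Phi_n$ vanishes at $p=0$ and $p=1$. Hence $\lambda_p$ divides it, and $\Phi_n/\lambda_p$ is a single formula valid on all of $[0,1]$. The paper then substitutes $k_0=(p-\tfrac12)W_p$ and $\Delta k=\sqrt{W_p}$ in one line.

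\textbf{Your route.} You work pointwise in $p$. For $p\in(0,1)$ you take the logarithm and divide by the nonzero constant $\lambda_p$. At the endpoints you apply Theorem~\ref{thm:exist} directly to the European payoffs $-2\log(S_T/S_0)$ and $2(S_T/S_0)\log(S_T/S_0)$.

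\textbf{What each buys.} Your route proves the corollary as stated and avoids the divisibility argument, but it yields less. Your interior formulas a priori carry a factor $1/\lambda_p$. Your argument therefore does not control the remainder uniformly as $p\to0$ or $p\to1$. Nor does it show that the interior formulas extend continuously to the endpoint formulas. That continuity is what lies behind Proposition~\ref{prop:power} reducing to Propositions~\ref{prop:M} and~\ref{prop:gamma}. Your ``continuous limit'' alternative would need exactly the paper's observation that each retained order vanishes at $p\in\{0,1\}$; smoothness in $p$ alone is not enough.

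On the other hand, your treatment of the $\epsilon$-dependent center and spacing is more careful than the paper's one-line substitution. The precise reason it works is as follows. The theorem's identity holds identically in $(k_0,\Delta k)$: the coefficients of $\epsilon^m$, $m\le n$, vanish as functions of these parameters. Substituting formal power series of nonnegative $\epsilon$-order, with $\Delta k$ having invertible leading term $\sqrt M$, can only push contributions to higher order. So no terms of order $\le n$ are created.

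\textbf{One correction.} The log payoff does satisfy the theorem's growth hypothesis, since $|2k|\le 2e^{|k|}$; likewise $|2ke^{k}|\le 2e^{2|k|}$. Given finiteness of $\ee{|X_{0,T}|}$ and $\ee{|X_{0,T}|\,e^{X_{0,T}}}$, both endpoints follow from Theorem~\ref{thm:exist} verbatim. No appeal to attainability or to the internals of its proof is needed.
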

\begin{proof}
Take $g(S)=(S/S_0)^p$, so that $V=\ee{e^{pX_{0,T}}}=e^{\lambda_p W_p}$.  In
the proof of Theorem \ref{thm:exist}, the payoff tilts the Gaussian by
$e^{pk}$, and
\[
\log\ee{e^{pX_{0,T}}}=\Phi_n(p;\Sigma_0,\dots,\Sigma_n)+\cO(\epsilon^{n+1}),
\]
with $\Phi_n$ polynomial in $p$ at each order.  
Since $\ee{1}=1$ and
$\ee{e^{X_{0,T}}}=1$, each retained order of $\Phi_n$ vanishes at $p=0$ and
$p=1$, so $\lambda_p=\tfrac12\,p\,(p-1)$ divides it, and
$W_p=\Phi_n/\lambda_p$ is well defined for every $p\in[0,1]$, endpoints
included.  Taking the reconstruction points $k_0+j\Delta k$ with
$k_0=(p-\tfrac12)W_p$ and $\Delta k=\sqrt{W_p}$, the coefficients depend
on $W_p$, and the resulting relation is the fixed-point equation.

\end{proof}
Corollary \ref{cor:exist} guarantees that the eliminations of Algorithm
\ref{algo:Fn}, which produced Propositions \ref{prop:M},
\ref{prop:gamma} and \ref{prop:power}, succeed at every order.
\subsubsection{On the choice of magic strikes}
Theorem \ref{thm:exist} makes no choice of strikes: any $n+1$ of them
reconstruct the polynomial, at the cost of coefficients that depend on
the strikes chosen.  What forces the choice made in Definition
\ref{def:finite-differences} is the match at order $\epsilon$ from a
single strike: comparing the order-$\epsilon$ term of $W_p$ with the
prefactor of $\MXd$ in \eqref{eq:BG3}, $P_1=d_0$ matches $W_p$ to
$\cO(\epsilon^2)$ precisely when $k_0=(p-\tfrac12)W_p$.

\subsubsection{Magic strikes for other claims}\label{sec:other_claims}
Theorem \ref{thm:exist} applies to any attainable claim.  The leverage
contract $L=G-M$, for example, is best approximated as the difference of
the magic-strike approximations of $G$ and $M$.  The stochasticity
contract $\zeta=\var[X_{0,T}]-\ee{\angl{X}_{0,T}}$\footnote{The three expressions for $\zeta$ agree.  Each tree enters
$W_p$ through the factor $\lambda_p^{m-1}p^{x}$, whose derivative at
$p=0$ vanishes except for $\MXd$ and $\tfrac12\,\MMd$, so that
$\partial_p W_p\big|_{p=0}=\MXd-\tfrac14\,\MMd$.  Similarly, only $\M$,
$\MXd$ and $\tfrac12\,\MMd$ contribute to the coefficient of $u^2$ in
the CGF \eqref{eq:CGFmF}, so that the second cumulant of $X_{0,T}$ is
$\var(X_{0,T})=M-\MXd+\tfrac14\,\MMd$, whence
$\zeta=\var(X_{0,T})-\ee{\angl{X}_{0,T}}=\tfrac14\,\MMd-\MXd
=-\,\partial_p W_p\big|_{p=0}$.} inherits its magic-strike
approximation from the power contract through the exact identity
$\zeta=-\partial_p W_p\big|_{p=0}$: differentiating the fixed-point
equation in $p$ introduces no new strikes.  In both cases the error is
comparable to that of the underlying power-contract approximations,
at worst the sum of two such errors; the claims themselves being small,
the relative error is correspondingly larger.

\subsection{The volatility contract}\label{sec:convexity}

The magic-strike construction does not extend to every claim.  The
volatility contract $\ee{\sqrt{\angl X_{0,T}}}$ is an obvious example.  The
traders' approximation justified by Rolloos and Arslan (RA)
\cite{rolloos2017taylor} approximates it by $\sigma_-(0)$, the square root
of the first-order magic-strike approximation of the variance contract.
This is exact to first order in the forest expansion, and to second order
when $\rho=0$, but the maturity-$T$ smile does not determine the leading
correction.  It is instructive to see why explicitly in tree language.

We start by expanding the volatility contract in diamond trees.  Note that only trees built
from $\M$ alone appear: the payoff $\sqrt{\angl {X}_{0,T}}$ involves only the quadratic
variation but not the underlying.
\begin{prop}\label{prop:volDiamond}
\begin{align}
\ee{\sqrt{\angl{X}_{0,T}}}
= \sqrt{M}\left\{1 -\frac{\epsilon^2}{8}\,\frac{\MMd}{M^2}+\frac{3\,\epsilon^4}{16}\,\frac{\MMdMd}{M^3}
- \frac{15\epsilon^4}{128}\frac{(\MMd)^2}{M^4}
+ \cO(\epsilon^6)
\right\}.
\label{eq:volDiamond}
\end{align}
\end{prop}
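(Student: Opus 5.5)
The plan is to expand the square root around the variance-contract value $M$ and then identify the resulting moments with diamond trees built only from $\M$ leaves. Write
\[
\angl{X}_{0,T} = M + Y_T, \qquad Y_t := \ee{\angl{X}_{0,T}\mid\cF_t} - M ,
\]
so that $Y$ is a martingale with $Y_0=0$. Since $\xi_t(u)$ is driven by $W$, $\ee{\angl{X}_{0,T}\mid\cF_t}=\angl{X}_{0,t}+\int_t^T\xi_t(u)\,du$, and its martingale part is exactly the process whose diamond products define the trees with $\M$ leaves. In particular, $\MMd=(M\dm M)(T)=\ee{\angl{Y}_{0,T}}$. In the formal $\epsilon$-grading each $\dm$ with an $\M$ leaf carries $\epsilon^2$, so $Y_T$ is formally $\cO(\epsilon)$ in the sense that its $m$-th moment is $\cO(\epsilon^{2\lceil m/2\rceil})$.

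Next, Taylor-expand
\[
\sqrt{M+Y}=\sqrt M\Bigl(1+\tfrac{Y}{2M}-\tfrac{Y^2}{8M^2}+\tfrac{Y^3}{16M^3}-\tfrac{5Y^4}{128M^4}+\cO(Y^5)\Bigr)
\]
and take expectations term by term, as a formal power series in $\epsilon$. The linear term vanishes because $\ee{Y_T}=0$. The remaining task is to express $\ee{Y_T^2}$, $\ee{Y_T^3}$ and $\ee{Y_T^4}$ through order $\epsilon^4$ (actually $\epsilon^5$) in trees.

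For the moments, apply the exponentiation theorem of \cite{alos2020exponentiation}, underlying Corollary \ref{cor:CGF}, to the martingale $Y$ in place of $X$. The CGF $\log\ee{e^{\ui uY_T}}$ is then a sum of forests with only $\M$-type leaves, which gives the cumulants
\[
\kappa_2=\MMd,\qquad \kappa_3=3\,\MMdMd=3\,\bigl((M\dm M)\dm M\bigr),\qquad \kappa_4=\cO(\epsilon^6).
\]
The last bound holds because every fourth-order tree has three diamonds with $\M$. Hence
\[
\ee{Y^2}=\MMd,\qquad \ee{Y^3}=3\,\MMdMd,\qquad \ee{Y^4}=3(\MMd)^2+\cO(\epsilon^6).
\]
Substituting these gives the coefficients in \eqref{eq:volDiamond}: $-\tfrac18$, then $\tfrac1{16}\cdot3=\tfrac3{16}$, then $-\tfrac5{128}\cdot 3=-\tfrac{15}{128}$. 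The fifth moment is $\cO(\epsilon^6)$, since it involves at least three diamonds ($\kappa_5$, or $\kappa_2\kappa_3$). No odd powers of $\epsilon$ appear, because only $\M$ leaves occur and each diamond among them costs $\epsilon^2$.

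The main obstacle I expect is pinning down the third cumulant precisely, with the correct symmetry factor $3$ and the correct tree $(M\dm M)\dm M$. I would get it from the cubic term of the exponentiation recursion, $\mF_1=\tfrac12\,(\ui u)^2\,(Y\dm Y)$ followed by $\ui u\,(Y\dm\mF_1)$, which produces $\tfrac{(\ui u)^3}{2}\,(Y\dm Y)\dm Y$. Matching this with $\tfrac{(\ui u)^3}{3!}\kappa_3$ gives $\kappa_3=3\,(Y\dm Y)\dm Y$. A secondary point is making the order counting rigorous: I will treat everything as formal power series in $\epsilon$, as the paper does for the forest expansion, rather than attempting analytic remainder bounds.
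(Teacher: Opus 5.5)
Your argument is correct and reaches \eqref{eq:volDiamond}, but it handles the square root differently from the paper. The paper starts from the exact Laplace-transform identity $\ee{\sqrt{\angl{X}_{0,T}}}=\frac{1}{2\sqrt\pi}\int_0^\infty\bigl(1-\ee{e^{-\lambda\angl{X}_{0,T}}}\bigr)\lambda^{-3/2}\,d\lambda$. It imports the forest expansion $\ee{e^{-\lambda\angl{X}_{0,T}}}=\exp\{-\lambda\M+\epsilon^2\tfrac{\lambda^2}{2}\MMd-\epsilon^4\tfrac{\lambda^3}{2}\MMdMd+\cO(\epsilon^6)\}$ directly from \cite[Theorem 1.1]{friz2020forests}, with $z_1=0$ and $z_2=-\lambda$. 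It then expands the exponential and integrates term by term, so the coefficients come from $\Gamma(n-\tfrac12)$. You instead Taylor-expand $\sqrt{M+Y}$ about $M$ and convert cumulants to moments.

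The inputs agree exactly. The paper's exponent is $-\lambda\kappa_1+\tfrac{\lambda^2}{2}\kappa_2-\tfrac{\lambda^3}{6}\kappa_3$ with your $\kappa_2=\MMd$, $\kappa_3=3\,\MMdMd$ and $\kappa_4=\cO(\epsilon^6)$. Moreover, $\frac{1}{2\sqrt\pi}\int_0^\infty\lambda^{n-3/2}e^{-\lambda M}\,d\lambda=(-1)^{n-1}\frac{d^n}{dM^n}\sqrt M$, so the two computations coincide term by term and the difference is bookkeeping.

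The paper's route puts its only formal step into a ready-made expansion of the Laplace transform. It never Taylor-expands $\sqrt{\cdot}$ about $M$, a series that diverges on the event $\angl{X}_{0,T}>2M$. It also extends verbatim to other Laplace-representable functions of $\angl{X}_{0,T}$, such as fractional powers.

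Your route is more elementary and more transparent. The coefficients are visibly binomial coefficients of $(1+y)^{1/2}$ times moments. Your count that the $m$-th moment is $\cO(\epsilon^{2\lceil m/2\rceil})$ makes explicit why truncating at $Y^4$ suffices, which the paper leaves implicit in the $\cO(\epsilon^6)$ of the exponent. The price is re-running the exponentiation theorem on the auxiliary martingale $Y$ and identifying $Y\dm(Y\dm Y)$ with $\MMdMd$. That identification is legitimate because the diamond product sees only martingale parts.

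Both arguments are formal in $\epsilon$ to the same degree. Expanding $e^{\epsilon^2\lambda^2\MMd/2}$ inside the $\lambda$-integral is no more uniform than your term-by-term expectation.
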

\begin{proof}
As in \cite[Chapter 11]{gatheral2006volatility}, the volatility contract admits the Laplace-transform representation
\begin{equation}
\ee{\sqrt{\angl{X}_{{0,T}}}}
= \frac{1}{2 \sqrt{\pi}} \int_0^\infty
\frac{1-\ee{e^{-\lambda \angl{X}_{0,T}}}}{\lambda^{3/2}}\, d\lambda.
\label{eq:ExpectedVol}
\end{equation}
From \cite[Theorem 1.1]{friz2020forests}, applied with $z_1=0$ and $z_2=-\lambda$ and with the formal expansion parameter $\epsilon$ tracking the forest order, the moment generating function of the quadratic variation satisfies
\begin{align}
\ee{e^{-\lambda\,\angl{X}_{0,T}}}
&= \exp\left\{-\lambda\,\M+\epsilon^2\,\frac{\lambda^2}{2}\,\MMd -\epsilon^4\,\frac{\lambda^3}{2}\,\MMdMd + \cO(\epsilon^6)\right\}.
\label{eq:mgf}
\end{align}
Substituting \eqref{eq:mgf} into \eqref{eq:ExpectedVol}, expanding the exponential in $\epsilon$, and integrating term by term in $\lambda$ yields \eqref{eq:volDiamond}.
\end{proof}
\noindent
On the other hand, the zero-vanna strike differs from $-\tfrac M2$ by
$\cO(\epsilon^2)$ and $\Sigma'=\cO(\epsilon)$, so
$\sigma_-(0)=\sqrt{\Sigma\left(-\tfrac M2\right)}+\cO(\epsilon^3)$, and
from \eqref{eq:BG3},
\beq
\sqrt{\Sigma\left(-\tfrac M 2\right)}=\sqrt{M} \left\{ 1- \frac{\epsilon ^2}8\, \frac{ \MMd}{M^2}
-\frac{\epsilon ^2}{2} \,\frac{ \MXdXd}{ M^2}
+\frac{3\,\epsilon ^2}{8} \,\frac{ (\MXd)^2}{ M^3}
+ \cO(\epsilon^3)\right\}.
\label{eq:RA3}
\eeq
Subtracting \eqref{eq:volDiamond} from \eqref{eq:RA3}, the error in the RA approximation is
\beq
\sqrt{\Sigma\left(-\tfrac M 2\right)}-
\ee{\sqrt{\angl{X}_{0,T}}}
=\frac{\epsilon^2}{2}\,\sqrt{M} \left[ -
\frac{ \MXdXd}{ M^2}
+\frac{3}{4} \,\frac{ (\MXd)^2}{ M^3}
\right]+ \cO(\epsilon^3).
\label{eq:RAcorr}
\eeq
Both terms in \eqref{eq:RAcorr} are proportional to $\rho^2$, so the second-order
correction\footnote{Theorems 2 and 4 of \cite{alos2021difference} compute these in the short-time limit.} to the RA approximation vanishes when $\rho=0$. 
The smile does not determine this correction.  From the expansion of $W_p$
in Section \ref{sec:power}, the three lowest trees enter $W_p$ as
$p\,\MXd+\tfrac12\lambda_p\,\MMd+p^2\,\MXdXd$.  The shift
\[
\MXd\;\to\;\MXd+\delta,\qquad
\MMd\;\to\;\MMd+4\delta,\qquad
\MXdXd\;\to\;\MXdXd-\delta,
\]
realizable within the model class \eqref{eq:SVmodel}, leaves $W_p$ unchanged for every $p$, and with it the
distribution of $S_T$ and the entire maturity-$T$ smile.
The smile determines only the combinations invariant under the shift:
the stochasticity contract $\zeta=\tfrac14\,\MMd-\MXd$ is one such
combination, which is why it admits the magic-strike approximation of
Section \ref{sec:other_claims}.  
Under the shift, $\Sigma\left(-\tfrac M 2\right)$ is unchanged while, to
leading order, $\ee{\sqrt{\angl X_{0,T}}}$ changes by
$-\epsilon^2\delta/(2M^{3/2})$, so no functional of the smile can supply
the correction \eqref{eq:RAcorr}.
In particular, the volatility contract is not attainable.
\section{Numerical tests}\label{sec:numerics}
We now assess the accuracy of the magic-strike approximations in two stochastic volatility models and on SPX option data. In the model-based tests, the figures use 15 evenly spaced maturities between $0.05$ and $1.0$ years.
\subsection{Heston}
We first consider the Heston model,
\[
    d\xi_t(u) = \eta e^{-\lambda (u-t)} \sqrt{\xi_t(t)} \, dW_t,
\]
with initial forward variance curve $\xi_0(u) = \bar V + e^{-\lambda u} (V_0 - \bar V)$. 
For this model, the  variance and gamma contracts are available in closed form:
\begin{align*}
    M(T) &= \bar V T + (V_0 - \bar V) \frac{1 - e^{-\lambda T}}{\lambda}, 
    \\
    G(T) &= \bar V^\prime T + (V_0 - \bar V^\prime) \frac{1 - e^{-\lambda^\prime T}}{\lambda^\prime},
    \qquad \lambda^\prime = \lambda - \rho \eta,
    \quad \lambda^\prime \bar V^\prime = \lambda \bar V.
\end{align*}
\begin{table}[h]
\centering
\begin{tabular}{|c|c|c|c|c|c|}
\hline
 & $V_0$ & $\bar V$ & $\lambda$ & $\eta$ & $\rho$ \\
\hline
Set I \cite[Chapter 6]{bergomi2015stochastic} & $0.16$ & $0.04$ & $1.0$ & $0.6$ & $-0.8$\\
\hline
Set II \cite{bourgey2024smile} & $0.117$ & $0.048$ & $3.37$ & $1.99$ & $-0.68$ \\
\hline
\end{tabular}
\caption{Heston parameter sets used in the numerical tests.}
\label{tab:params_heston}
\end{table}
Table  \ref{tab:params_heston} lists the parameter sets used in our Heston tests; the forward variance curves differ substantially between the test sets, representing very different market environments. 
Figure \ref{fig:heston_magic_strikes} shows the order-5 variance contract magic strikes (see Proposition \ref{prop:M}):
\begin{equation}\label{eq:vs_magic_strikes_5}
    k_0, \quad k_0 \pm \Delta k, \quad k_0 \pm 2\, \Delta k,
\end{equation}
where $k_0=-\widehat M_5/2$, $\Delta k=(\widehat M_5)^{1/2}$, and $\widehat M_5$ is the fifth-order fixed-point estimate of the variance contract value, for $T \in \{0.1,0.25,0.5,1.0\}$ under the two parameter sets in Table \ref{tab:params_heston}. The strikes move further from the money as the volatility-of-volatility and skew increase, but remain well behaved across maturities in both calibrations.
\begin{figure}[H]
    \hspace{-0.5cm}
    \includegraphics[width=0.5\linewidth]{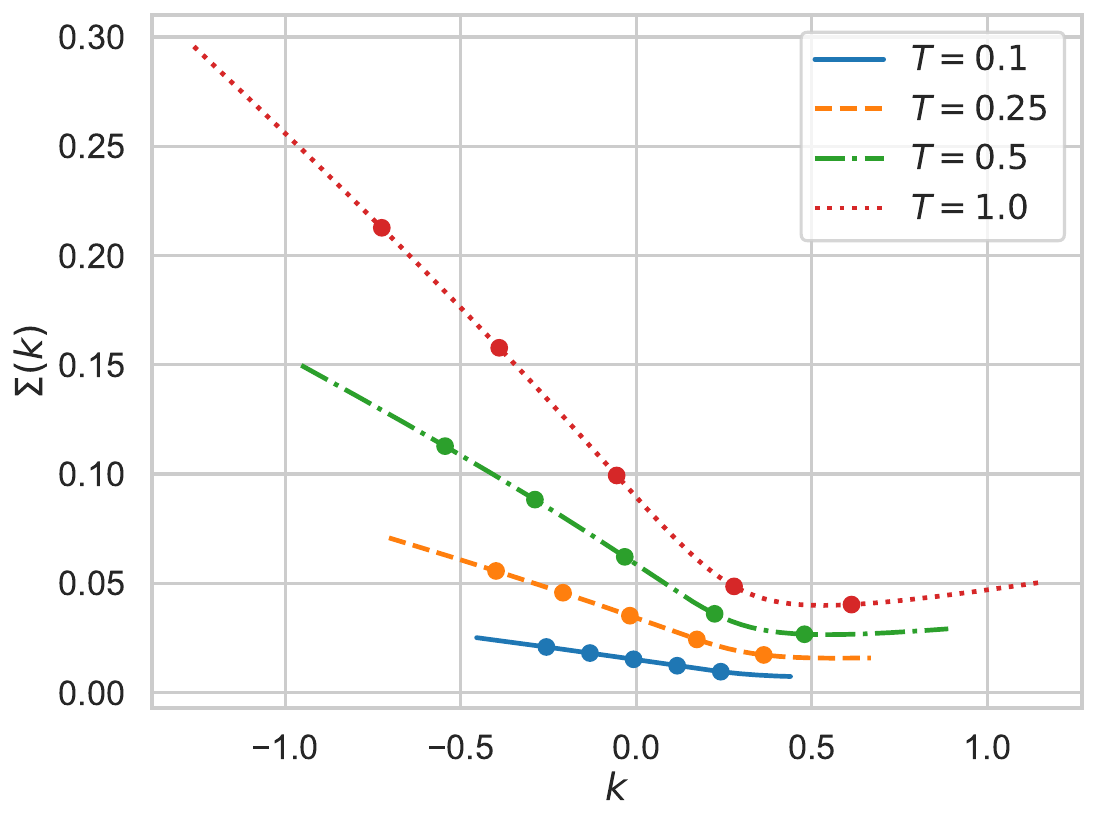}
    \includegraphics[width=0.5\linewidth]{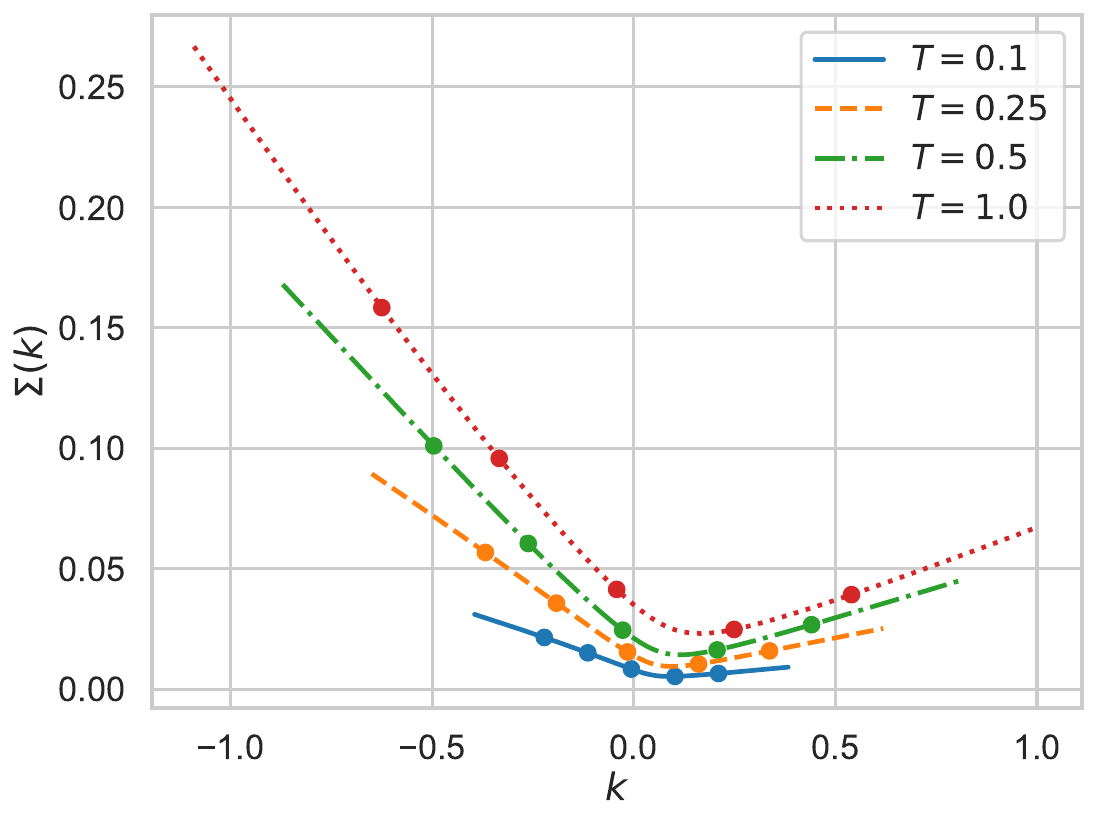}
    \caption{Order-5 variance contract magic strikes \eqref{eq:vs_magic_strikes_5} under the Heston model for the parameter sets in Table \ref{tab:params_heston}: Set I (left) and Set II (right). The maturities are $T \in \{0.1,0.25,0.5,1.0\}$. The curves show total implied variance $\Sigma(k)$; the dots mark the magic strikes.}
    \label{fig:heston_magic_strikes}
\end{figure}

Figure \ref{fig:heston_variance_gamma_swap} reports the order 1--5 approximations for the variance and gamma contracts under the Heston model. In each figure, the panels correspond to Set I (top) and Set II (bottom) in Table \ref{tab:params_heston}. 
For Set I, orders 2--5 improve substantially on the first-order
approximation. Increasing the order does not always improve the estimate,
however. In Set II, orders 2 and 3 are more accurate than orders 4 and 5
over much of the maturity range; the latter overestimate both contract
values. This may reflect how well a low-degree polynomial approximates
the smile as the stencil widens from $\Delta k$ to $2\Delta k$ on either
side of its center.

\begin{figure}[H]
    \centering
    \hspace{-0.5cm}
    \includegraphics[width=0.5\linewidth]{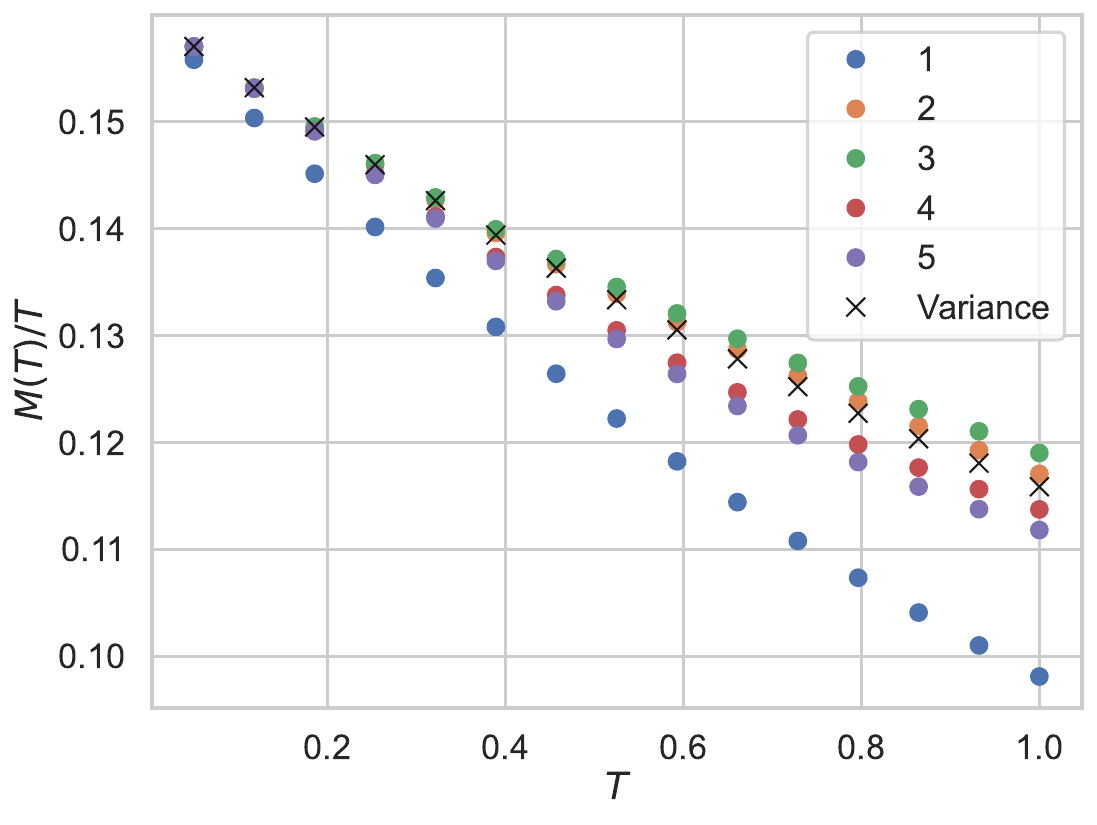}
    \includegraphics[width=0.5\linewidth]{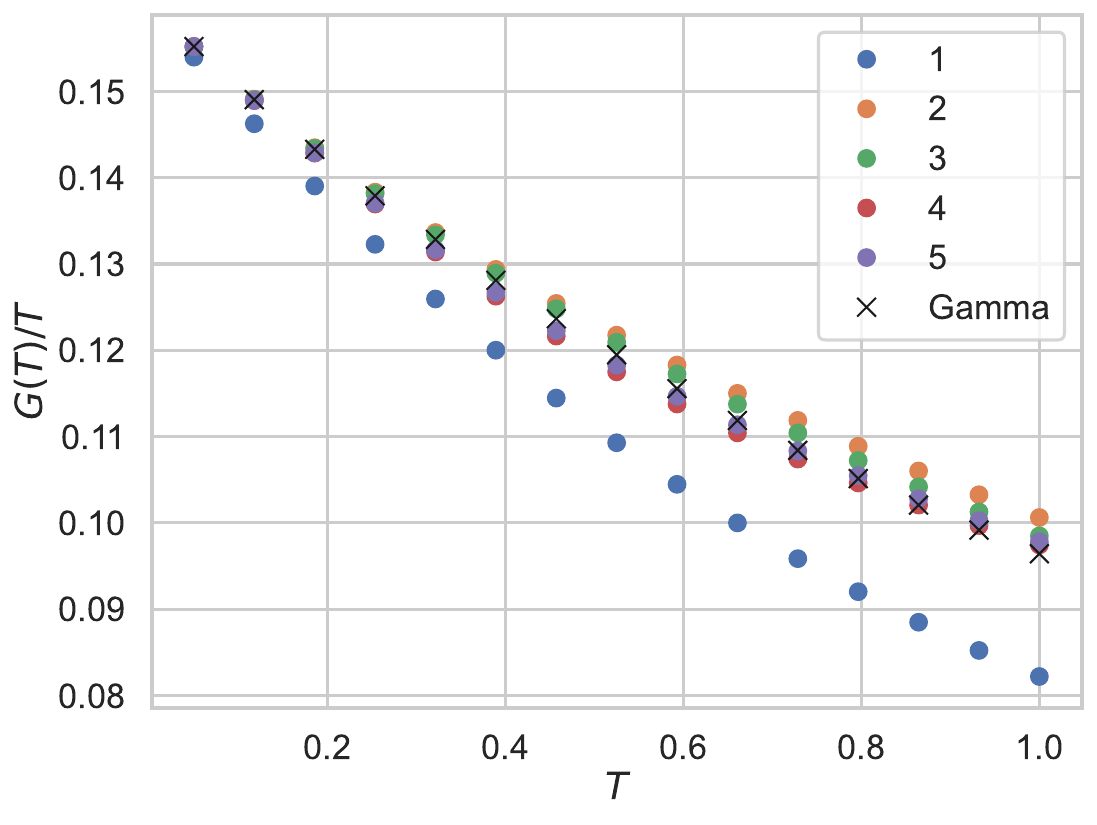}\\
    \hspace{-0.5cm}
    \includegraphics[width=0.5\linewidth]{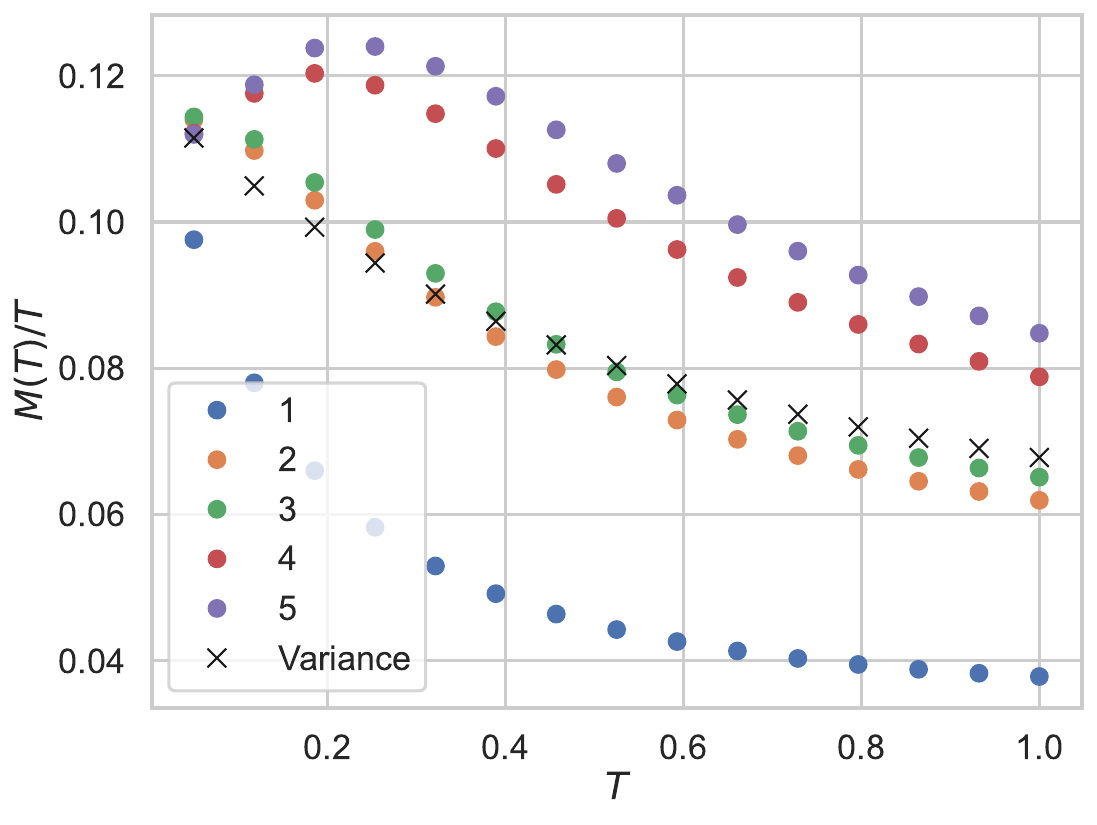}
    \includegraphics[width=0.5\linewidth]{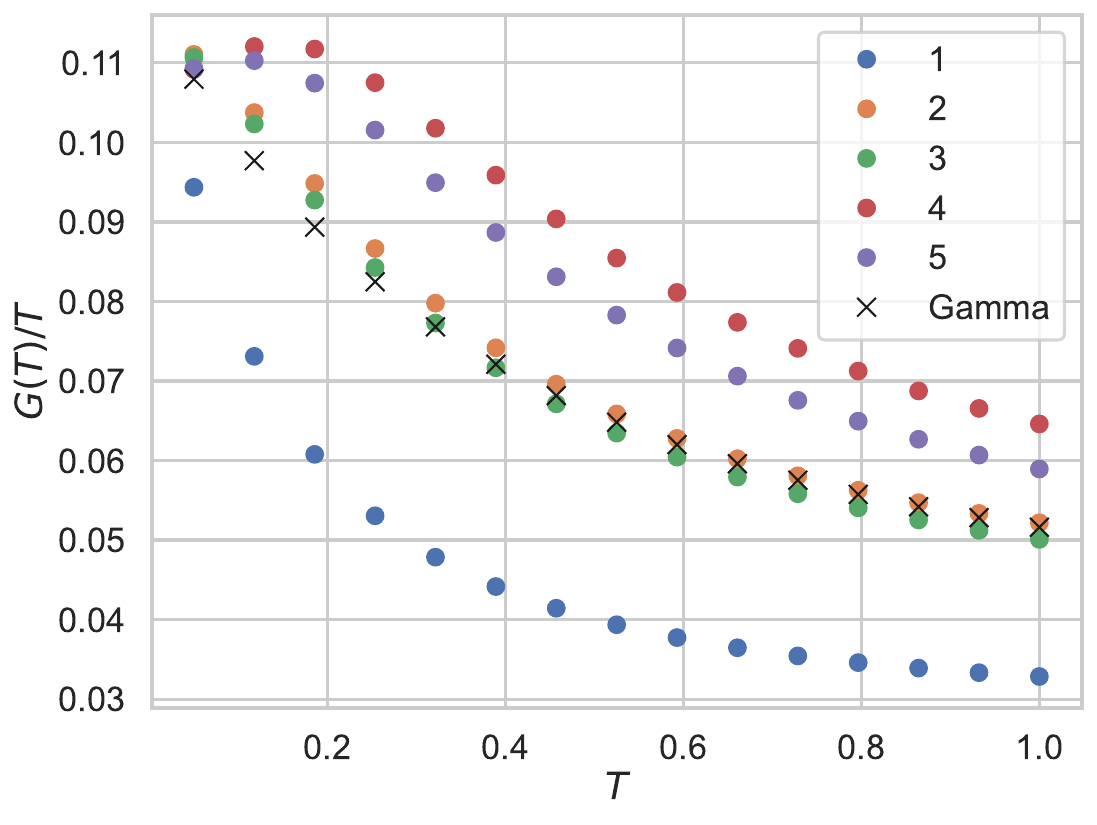}
    \caption{Variance (left) and gamma (right) contract approximations under the Heston model. Set I (top) and Set II (bottom) use the parameters in Table \ref{tab:params_heston}.}
    \label{fig:heston_variance_gamma_swap}
\end{figure}
\subsection{Rough Bergomi}
We next test the approximations in the rough Bergomi model \cite{bayer2016pricing}, where
\begin{equation}
    d\xi_t(u)  = \xi_t(u) \,\eta \,\sqrt{2H}\, (u-t)^{H-1/2} \, dW_t.
\end{equation}
The model is simulated with $300$ time steps and five independent batches
of $3\times10^5$ Monte Carlo paths each, using antithetic variates for
variance reduction. The gamma contract benchmark is estimated by Monte Carlo while the variance contract $M(T)=\int_0^T\xi_0(u)\,du$ is
deterministic. The initial forward variance curves for the two parameter sets represent very different market environments.
\begin{table}[H]
\centering
\begin{tabular}{|c|c|c|c|}
\hline
& $H$ & $\eta$ & $\rho$ \\
\hline
Set I & $0.23$ & $1.5$ & $-0.62$ \\
\hline
Set II & $0.11$ & $2.39$ & $-0.86$ \\
\hline
\end{tabular}
\caption{Rough Bergomi parameters from \cite{bourgey2024smile}.}
\label{tab:params_rough_bergomi}
\end{table}
\begin{figure}[H]
    \hspace{-0.5cm}
    \centering
    \includegraphics[width=0.5\linewidth]{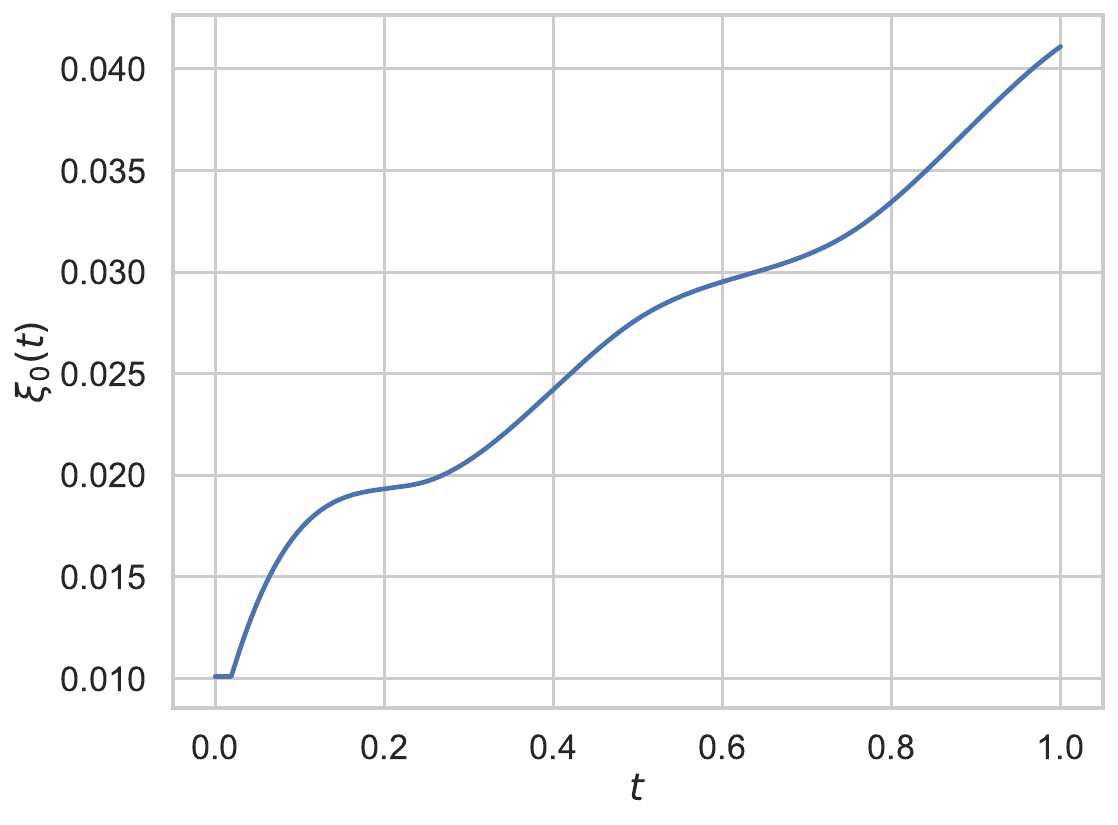}
    \includegraphics[width=0.5\linewidth]{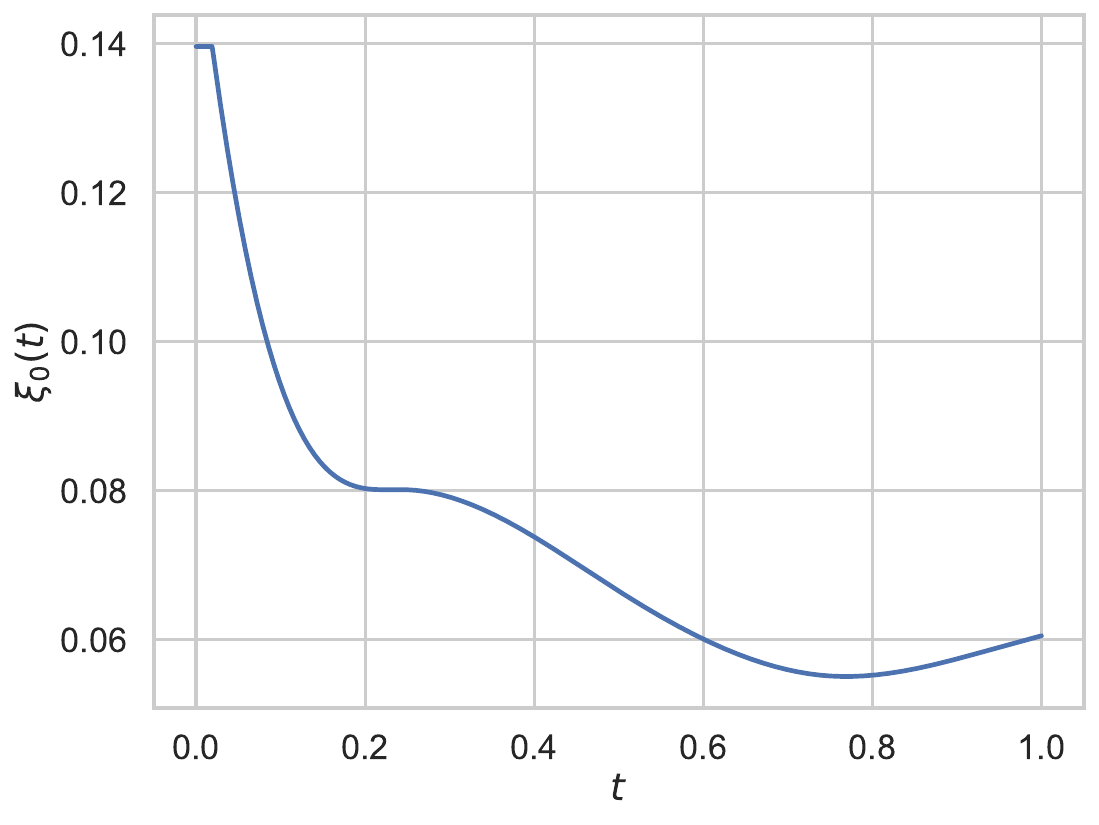}
    \caption{Initial forward variance curve $\xi_0$ for the rough Bergomi parameter sets in Table \ref{tab:params_rough_bergomi}: Set I (left) and Set II (right).}
    \label{fig:rbergomi_xi0}
\end{figure}
Both parameter sets produce pronounced short-maturity skew. Figure \ref{fig:rbergomi_xi0} shows the associated initial forward variance curves, and Figure \ref{fig:rbergomi_variance_gamma_swap} compares the magic-strike approximations for the variance and gamma contracts. 
For both parameter sets, orders 2--5 substantially improve on the
first-order approximation. The preferred order nevertheless depends
on the contract and maturity. For Set II, the fourth-order variance
approximation tracks the benchmark particularly closely, whereas
the second- and third-order gamma approximations are generally closer
to the benchmark than the fourth- and fifth-order estimates.
The latter tend to overestimate the gamma contract value at longer
maturities.

\begin{figure}[H]
    \hspace{-0.5cm}
    \includegraphics[width=0.5\linewidth]{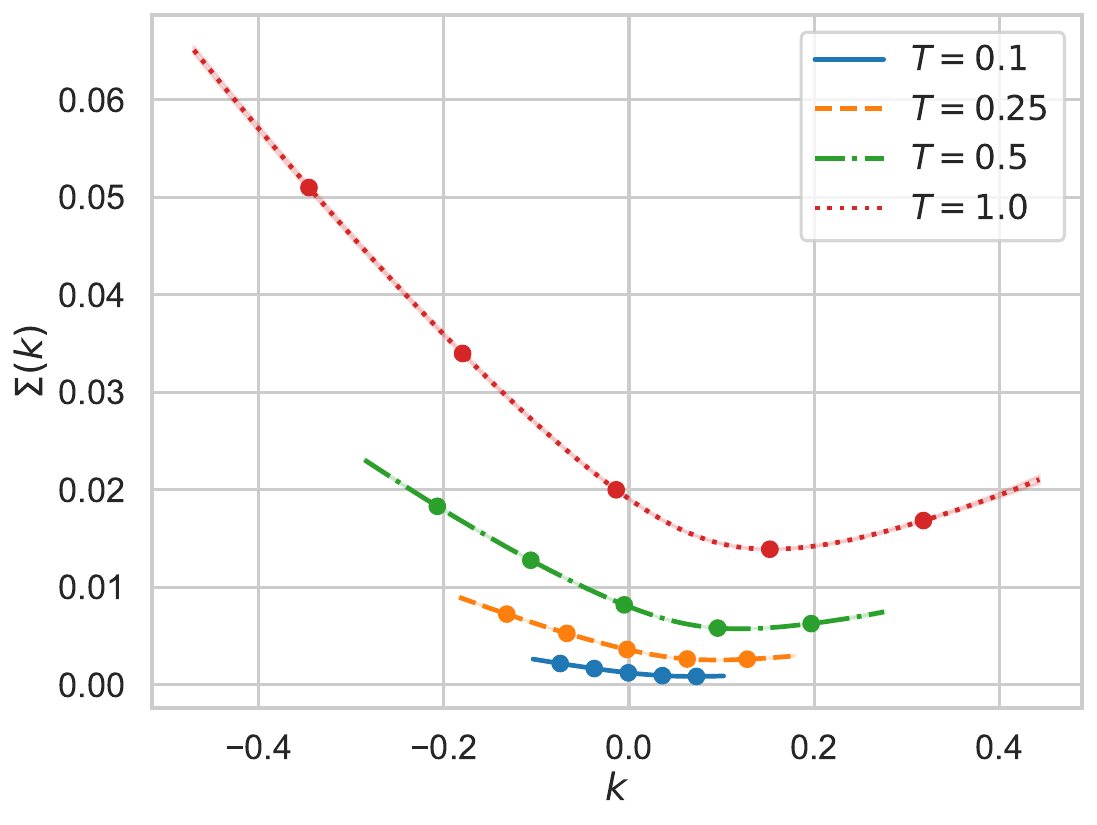}
    \includegraphics[width=0.5\linewidth]{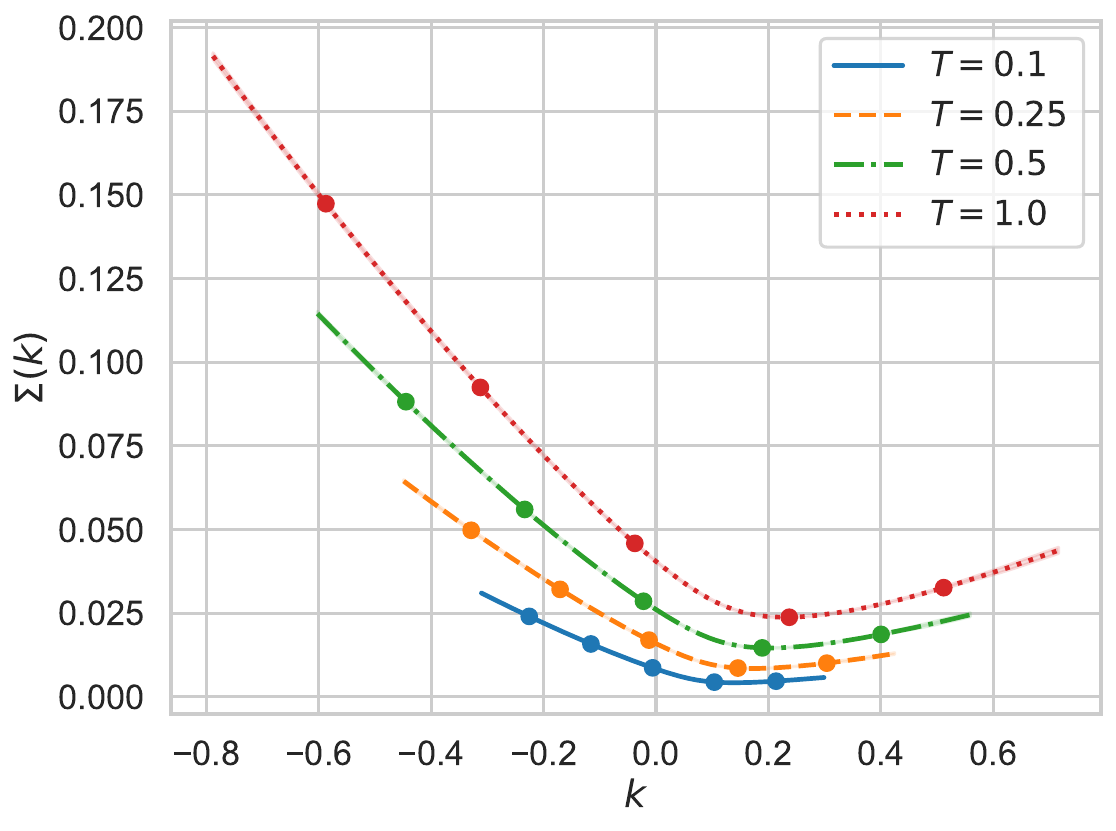}
    \caption{Order-5 variance contract magic strikes \eqref{eq:vs_magic_strikes_5} under the rough Bergomi model for the parameter sets in Table \ref{tab:params_rough_bergomi}: Set I (left) and Set II (right). The maturities are $T \in \{0.1,0.25,0.5,1.0\}$. The curves show total implied variance $\Sigma(k)$; the dots mark the magic strikes.}
    \label{fig:rbergomi_magic_strikes}
\end{figure}

\begin{figure}[H]
    \centering
    \hspace{-0.5cm}
    \includegraphics[width=0.5\linewidth]{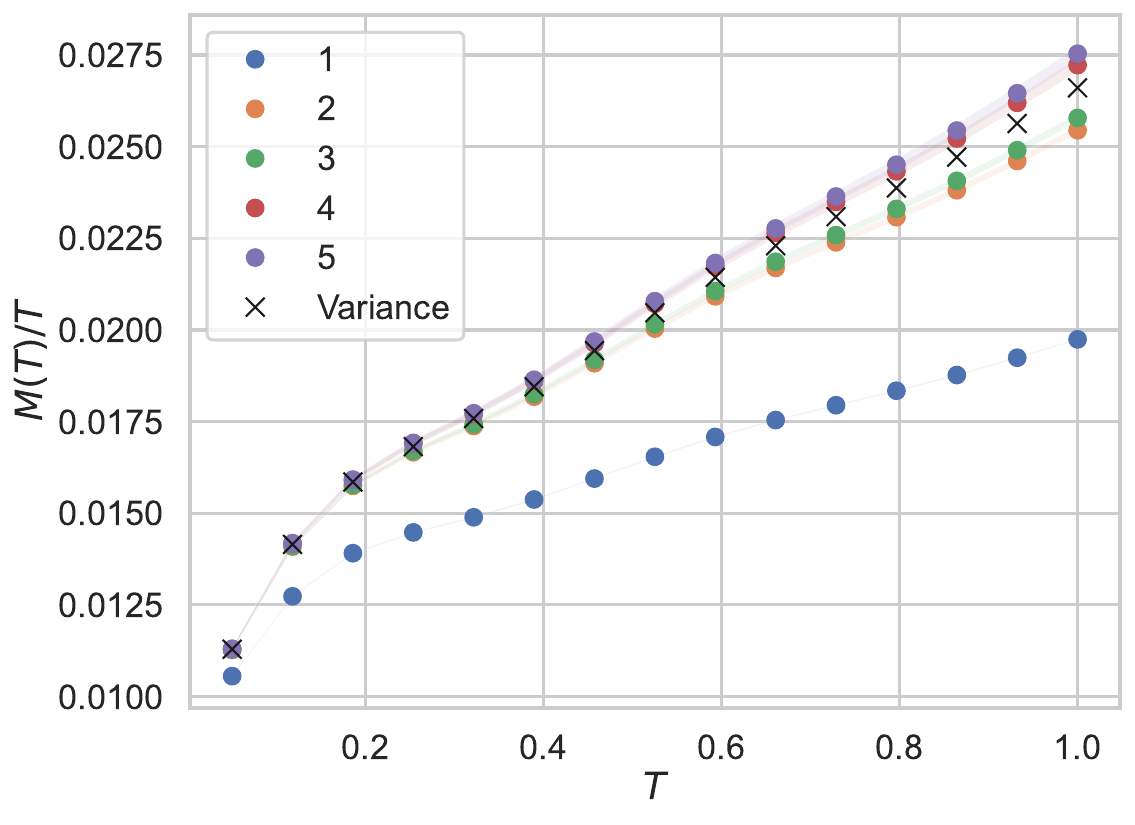}
    \includegraphics[width=0.5\linewidth]{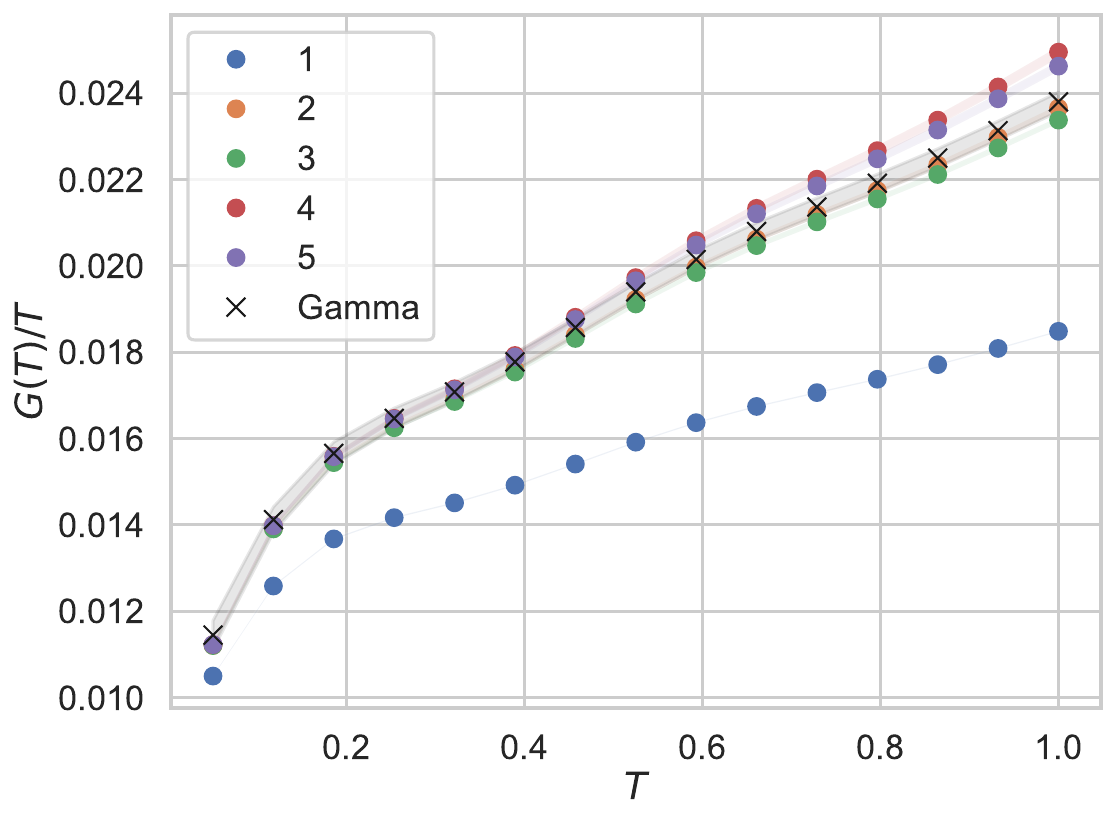}\\
    \hspace{-0.5cm}
    \includegraphics[width=0.5\linewidth]{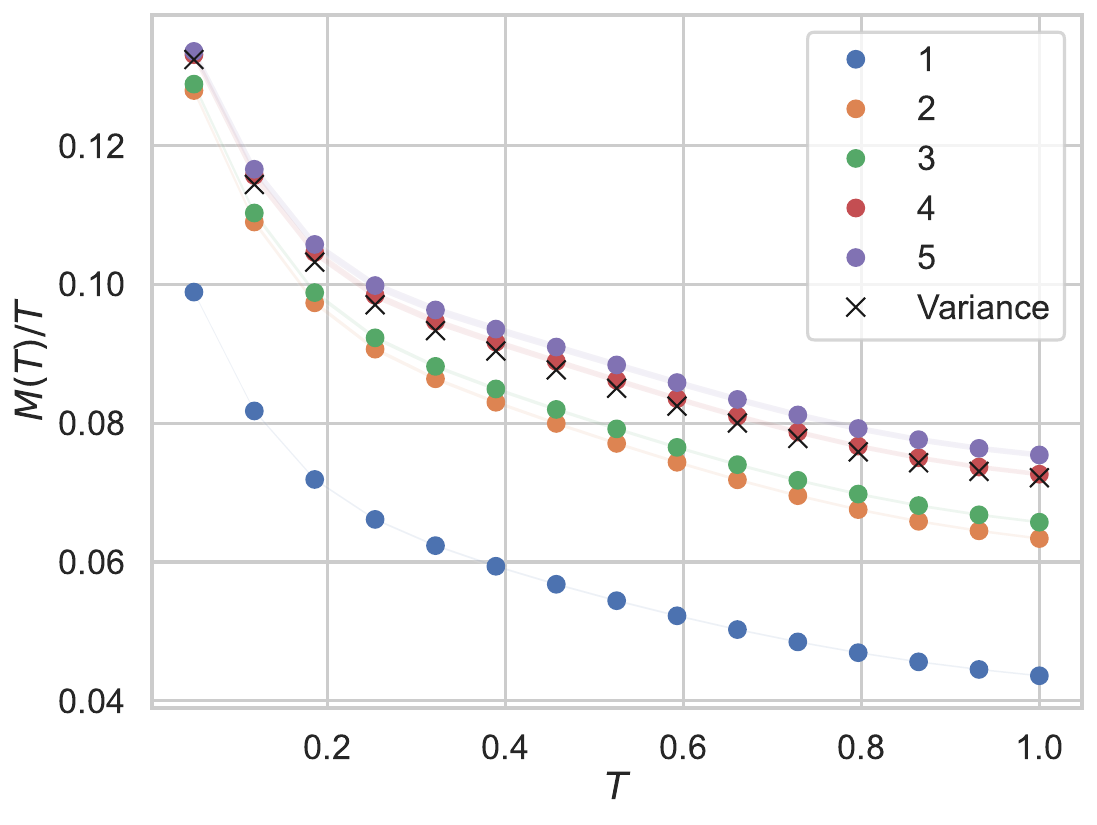}
    \includegraphics[width=0.5\linewidth]{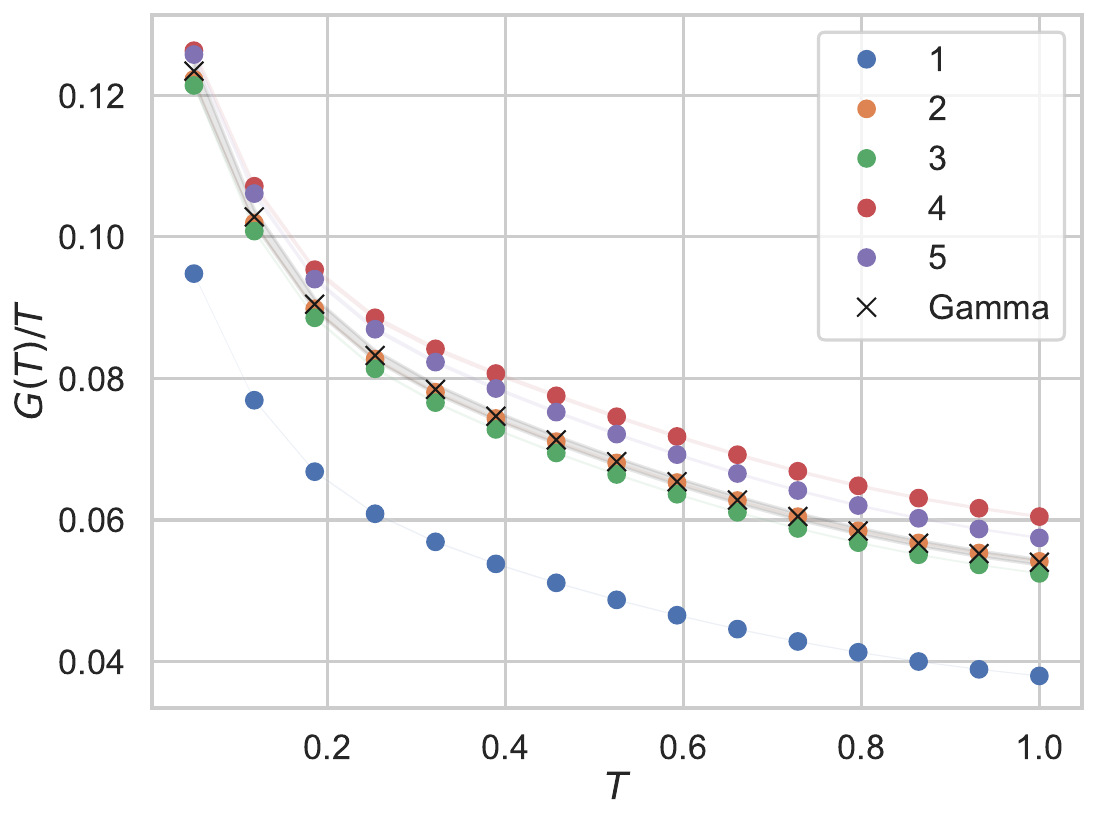}
    \caption{Variance (left) and gamma (right) contract approximations under the rough Bergomi model. Set I (top) and Set II (bottom) use the parameters in Table \ref{tab:params_rough_bergomi}. Monte Carlo confidence intervals computed from five independent batches are also shown.}
    \label{fig:rbergomi_variance_gamma_swap}
\end{figure}
\subsection{The market}
Finally, we apply the method to SPX implied volatility surfaces on 10
February 2025, 4 April 2025, 7 April 2025, and 2 July 2025. The smile is
interpolated on each maturity slice and extrapolated flat outside the
observed log-moneyness range.
For the variance and gamma contracts, the benchmarks are the Fukasawa
representations \eqref{eq:VS} and \eqref{eq:GS} evaluated on the same
interpolated smile with $10$ quadrature nodes.
The extrapolation affects only these benchmarks: on every slice shown,
the magic strikes lie within the quoted strike range, so the
magic-strike estimates are independent of the extrapolation choice.

Figure \ref{fig:spx_market_variance_gamma_swap} shows substantial improvements
over the first-order approximation. On 10 February and 2 July, orders 4 and
5 are closer to the plotted Fukasawa values than orders 2 and 3 for
variance, but lie above the Fukasawa estimates for gamma. On the two April dates,
orders 2--5 are more closely grouped. The Vola Dynamics\footnote{We are very grateful to Vola Dynamics for
   providing their variance and gamma contract marks.} marks shown for 2 July 2025 are also in close
   agreement with the magic-strike estimates.
The computation remains inexpensive. On the same data set, the runtimes (on
a MacBook Air M5, 24 GB RAM) in seconds were $0.022$ for the Fukasawa
benchmark and $0.037$, $0.049$, $0.070$, $0.14$, and $0.19$ for orders 1
through 5, respectively.
\newpage
\begin{figure}[H]
    \vspace{-2cm}
    \centering
    \hspace{-0.5cm}
    \includegraphics[width=0.5\linewidth]{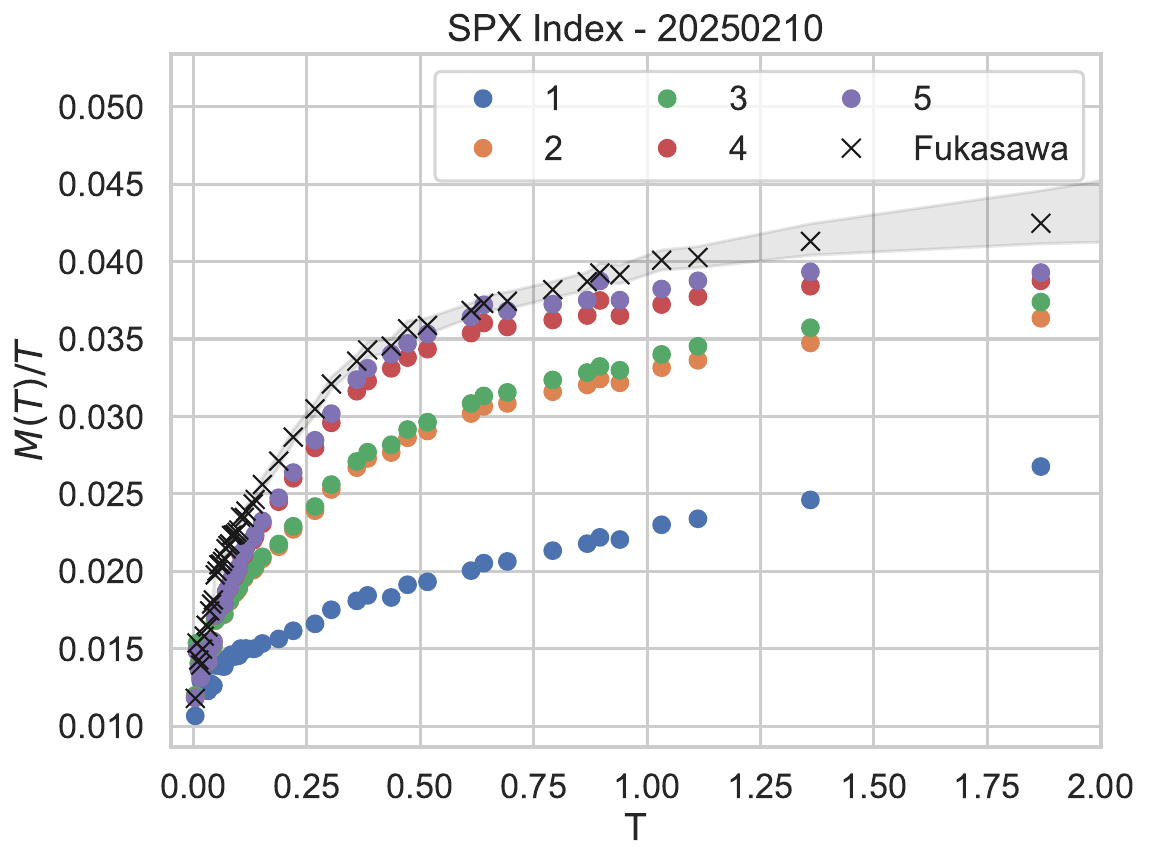}
    \includegraphics[width=0.5\linewidth]{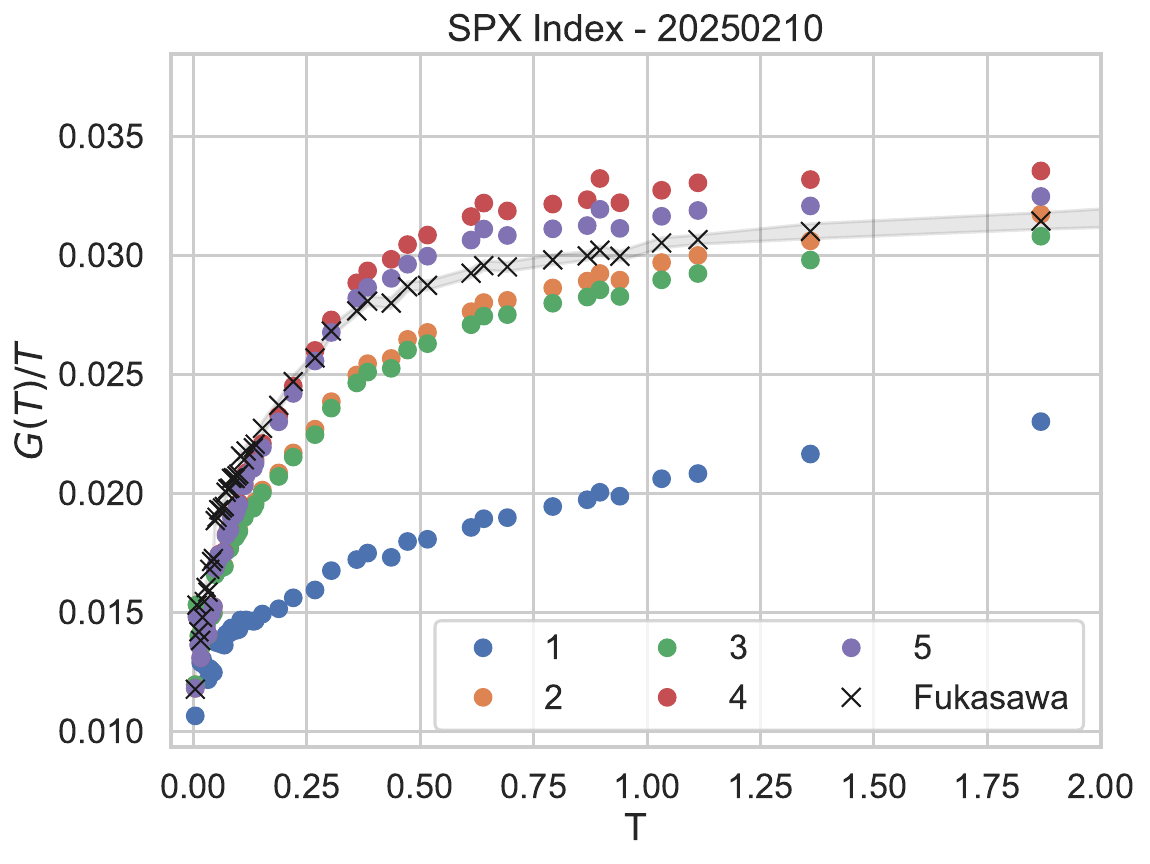}
    \\
    \hspace{-0.5cm}
    \includegraphics[width=0.5\linewidth]{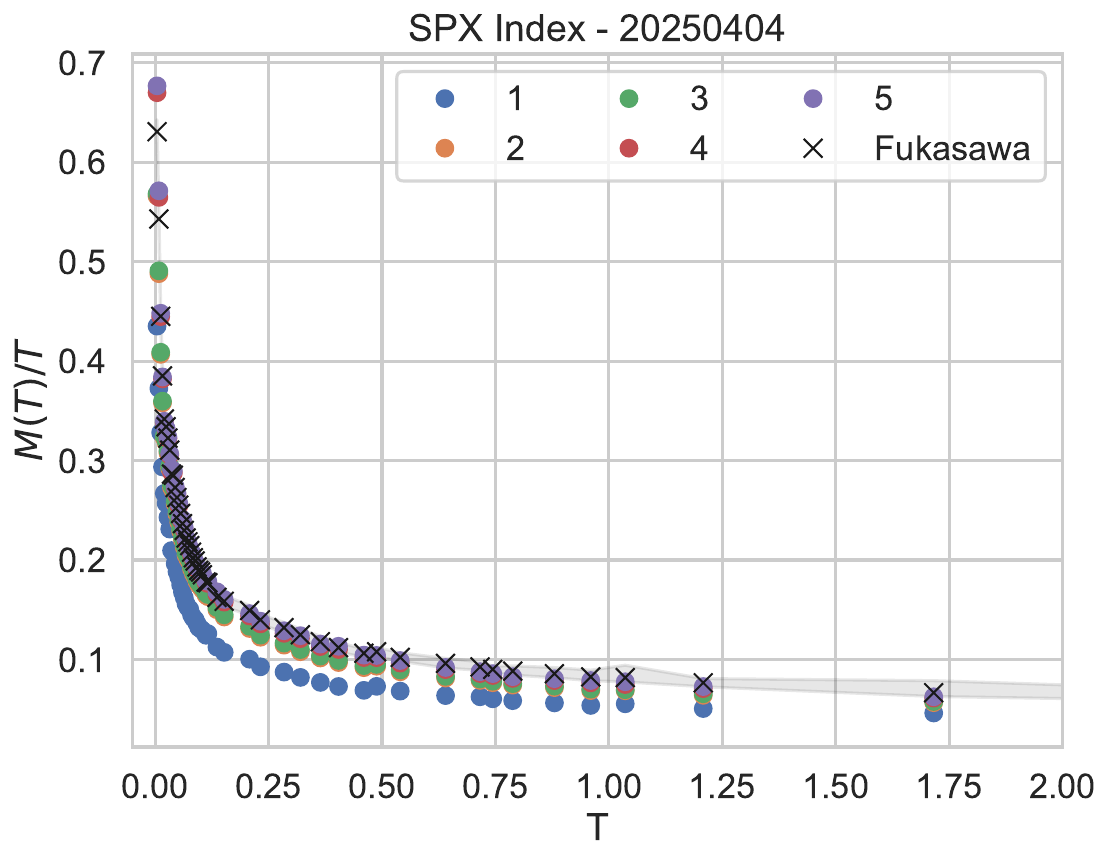}
    \includegraphics[width=0.5\linewidth]{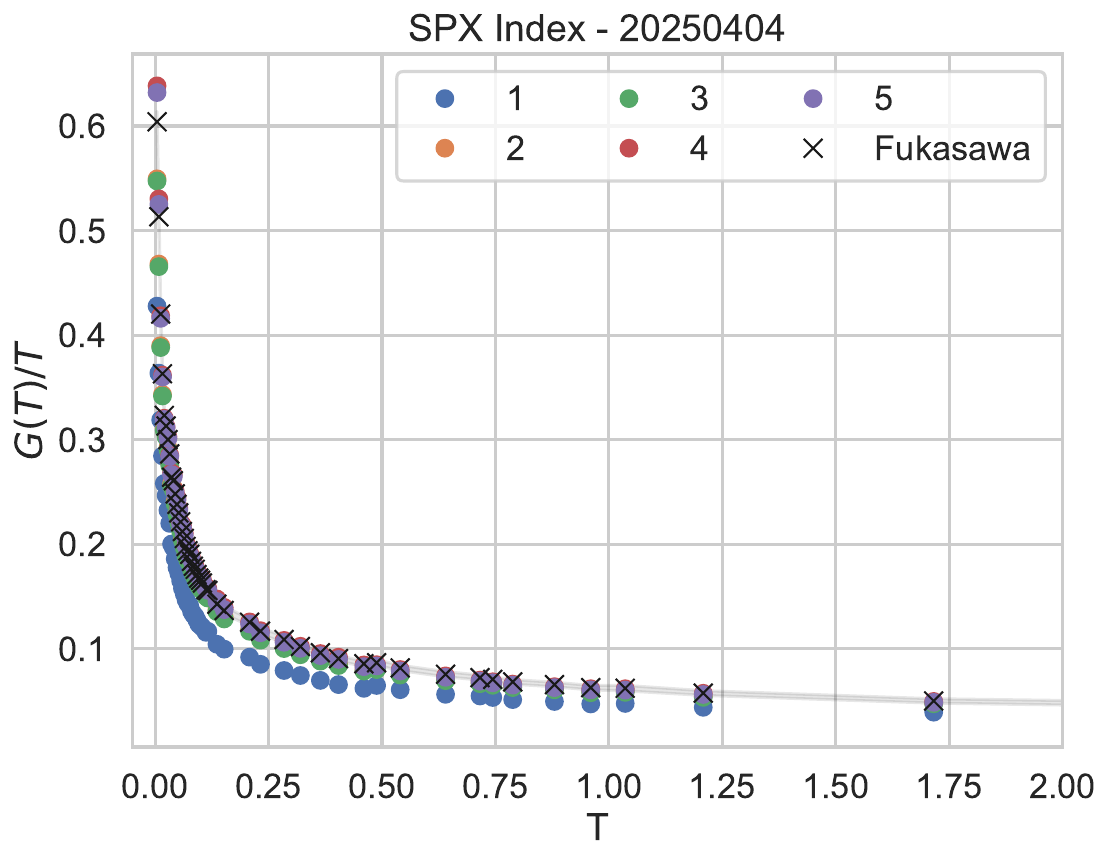}
    \\
    \hspace{-0.5cm}
    \includegraphics[width=0.5\linewidth]{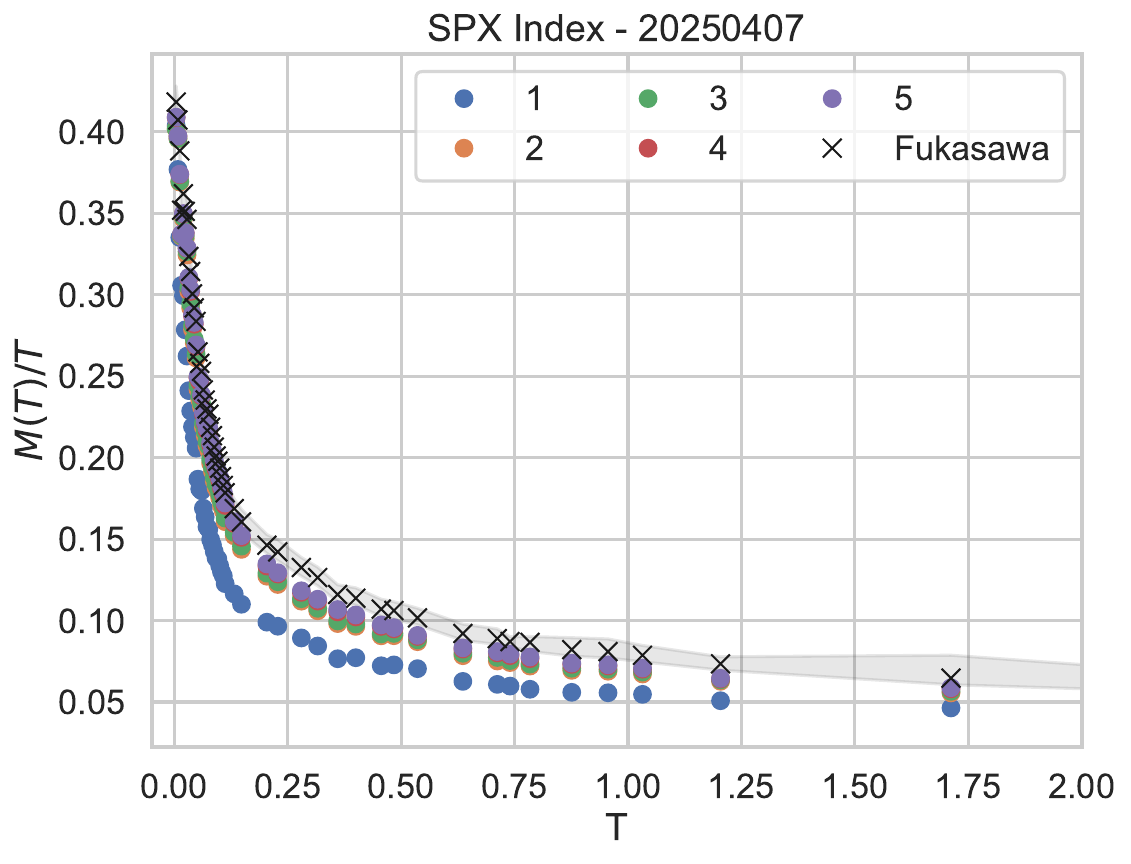}
    \includegraphics[width=0.5\linewidth]{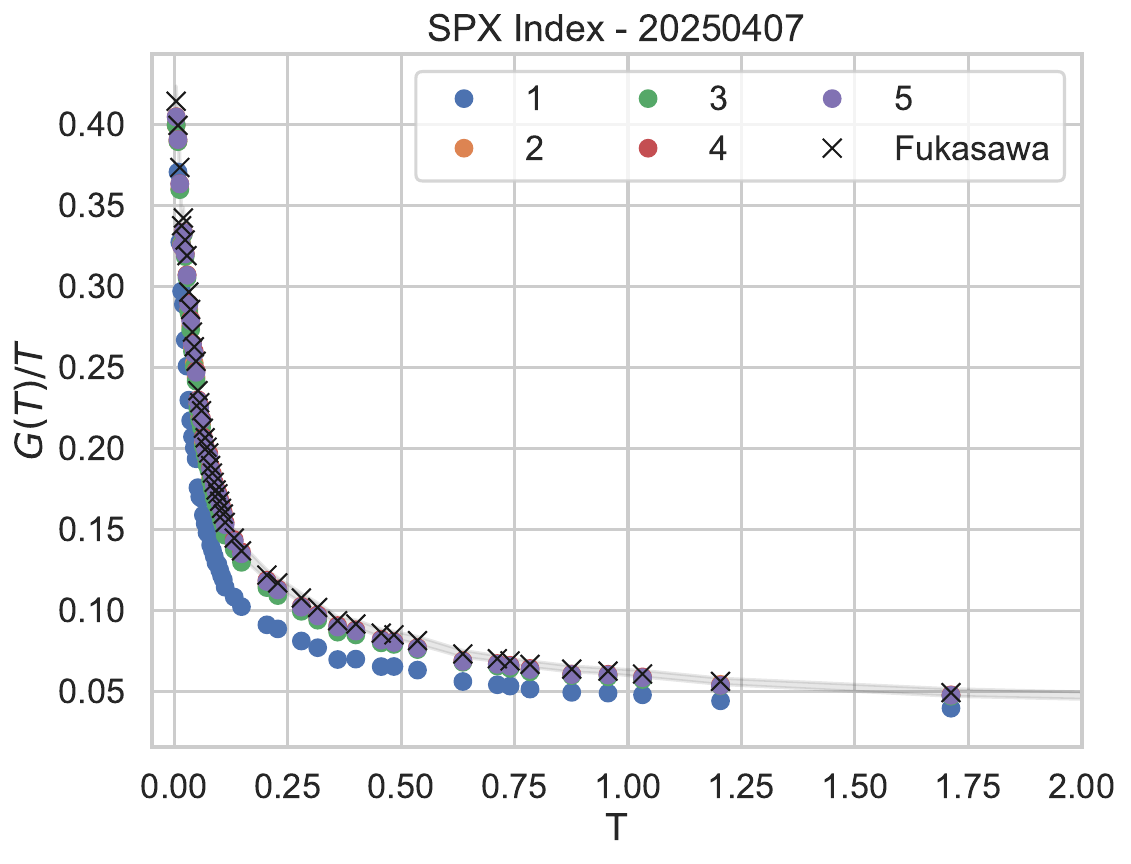}
    \\
    \hspace{-0.5cm}
    \includegraphics[width=0.5\linewidth]{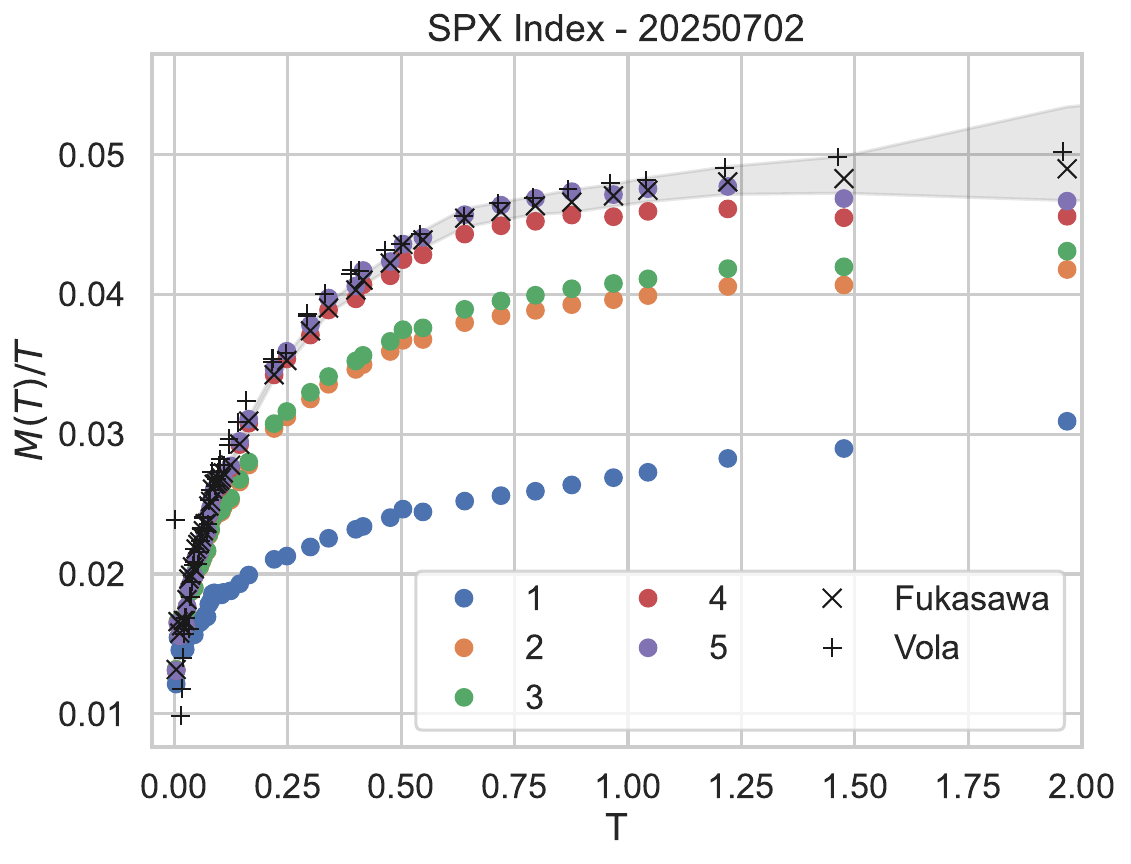}
    \includegraphics[width=0.5\linewidth]{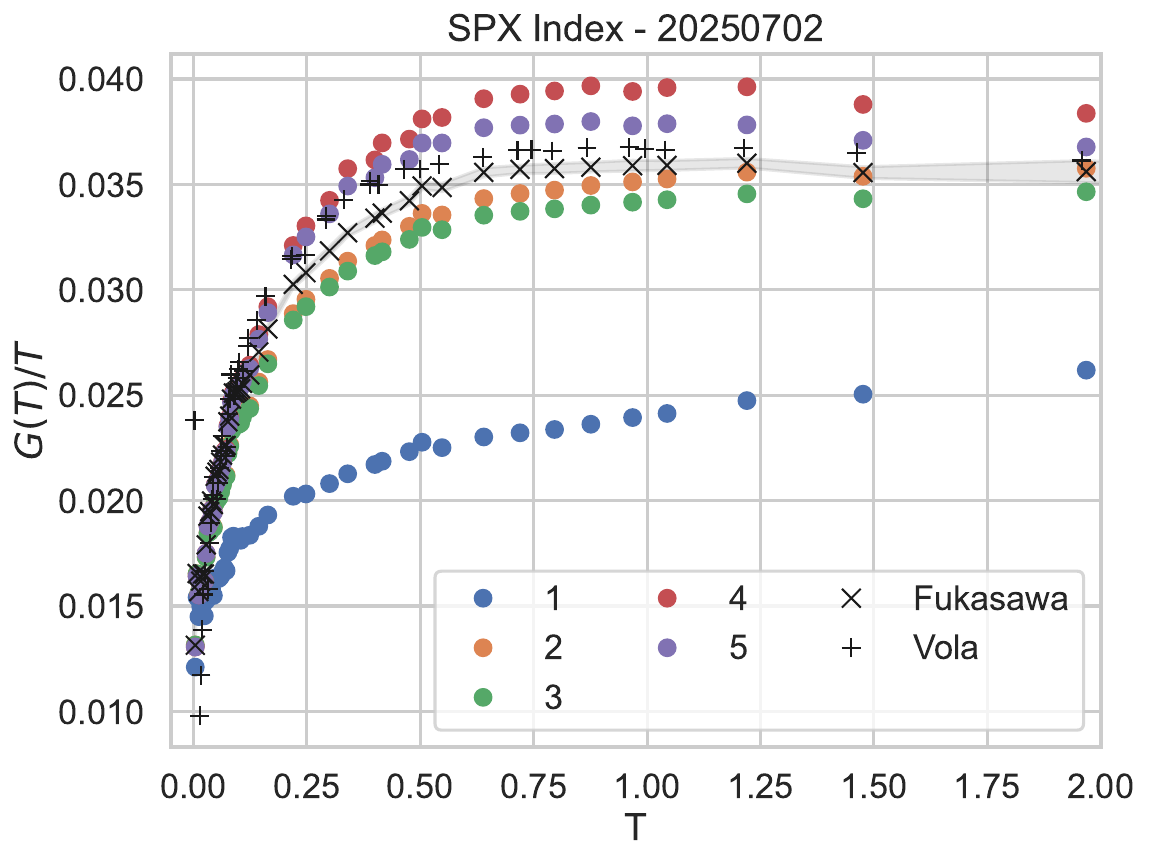}
    \caption{Annualized SPX variance (left) and gamma (right) contract estimates, $M(T)/T$ and $G(T)/T$, for 10 February, 4 April, 7 April, and 2 July 2025 (top to bottom). Colored dots show magic-strike approximations at orders 1--5 using mid smiles; black crosses show the Fukasawa estimates. Gray shading spans the Fukasawa estimates from bid and ask smiles. Black plus signs show Vola Dynamics marks for 2 July. The smile is extrapolated flat outside the observed strike range.}
    \label{fig:spx_market_variance_gamma_swap}
\end{figure}

\section{Summary}\label{sec:conclusion}
Using the forest expansion and the smile expansion of
\cite{bourgey2026demystifying}, we have extended the single-strike
approximations \eqref{eq:approx0} of the variance and gamma contracts to
higher order, and more generally to the one-parameter family of power
payoffs connecting them, deriving model-independent fixed-point
approximations up to fifth order and proving that such approximations
exist at every order.  Numerical tests under Heston and rough Bergomi show
good accuracy even for strongly skewed smiles, and on SPX market data the
method tracks the Fukasawa benchmark.
For the volatility contract, the Rolloos--Arslan approximation is the
square root of the first-order magic-strike approximation of the variance
contract; its second-order correction vanishes when $\rho=0$ but is
otherwise not determined by the smile.
Because the smile is read only at the magic strikes, the approximations
are independent of the smile beyond two standard deviations of the
money, and in particular of the extrapolation choice whenever the magic
strikes lie within the quoted range.  What the construction does require
is that the smile be locally polynomial: its accuracy is governed by the
quality of the polynomial fit at the magic strikes.

\section*{AI Use Disclosure} The authors made extensive use of generative artificial intelligence tools (Claude Opus 5 and Claude Fable 5) throughout the preparation of this manuscript. These tools were used to assist with writing and editing the manuscript, exploring mathematical ideas, proposing proof strategies, checking intermediate arguments, and suggesting improvements to existing proofs. The AI tools were used solely as assistants; all mathematical arguments were developed, critically assessed, and independently verified by the authors. The authors accept full responsibility for the correctness and originality of all results presented.

\bibliographystyle{alpha}
\bibliography{MagicStrikes}

\end{document}